    \newtheorem{theorem}{Theorem}[section]
    \newtheorem{corollary}{Corollary}[theorem]
    \newtheorem{lemma}[theorem]{Lemma}
    \newtheorem{defn}[theorem]{Definition}
    \newtheorem{remark}[theorem]{Remark}
    \newtheorem{ass}[theorem]{Assumption}
    \crefname{ass}{Assumption}{Assumptions}
    \newtheorem{prop}[theorem]{Proposition}
    \crefname{prop}{Proposition}{Propositions}
    \newcommand{\bbR}{\mathbb{R}}
    \newcommand{\cL}{\mathcal{L}}
    \newcommand{\cC}{\mathcal{C}}
    \newcommand{\cH}{\mathcal{H}}
    \newcommand{\bff}{\boldsymbol{f}}
    \newcommand{\bfg}{\boldsymbol{g}}
    \newcommand{\bfu}{\boldsymbol{u}}
    \newcommand{\bfv}{\boldsymbol{v}}
    \newcommand{\diff}{\,\textnormal{d}}
    \newcommand{\dt}{\Delta t}
    \newcommand{\dv}{\,\diff\bfv}
    \newcommand{\intd}{{\int_{\bbR^d}}}
    \newcommand{\mfn}{\mathfrak{n}}
    \newcommand{\vint}[1]{\langle \!\; #1 \!\; \rangle}
    \newcommand{\Vint}[1]{\left\langle #1 \right\rangle}
    \newcommand{\qquand}{\qquad \mbox{and} \qquad}
    \newcommand{\qqwhere}{\qquad \mbox{where} \qquad}
    \newcommand{\quiff}{\quad \iff \quad}
    \date{17 April 2024}
\title{A Nonlinear, Conservative, Entropic Fokker-Planck Model for Multi-Species Collisions
\thanks{
  This manuscript has been authored, in part, by UT-Battelle, LLC, under contract DE-AC05-00OR22725 with the US Department of Energy (DOE). The U.S. government retains and the publisher, by accepting the article for publication, acknowledges that the U.S. government retains a nonexclusive, paid-up, irrevocable, worldwide license to publish or reproduce the published form of this manuscript, or allow others to do so, for U.S. government purposes. DOE will provide public access to these results of federally sponsored research in accordance with the DOE Public Access Plan (https://energy.gov/downloads/doe-public-access-plan).
  \\
  This work was supported, in part, by the U.S. Department of Energy through the Los Alamos National Laboratory. Los Alamos National Laboratory is operated by Triad National Security, LLC, for the National Nuclear Security Administration of U.S. Department of Energy (Contract No. 89233218CNA000001).
  \\
  J. Hu’s research is partially supported by AFOSR grant FA9550-21-1-0358, and DOE grant DE-SC0023164.
  }
}
\author{
Evan Habbershaw\thanks{
    Department of Mathematics, The University of Tennessee, Knoxville, TN, 37996
  }, 
Cory Hauck\thanks{
    Department of Mathematics, The University of Tennessee, Knoxville, TN, 37996
    and
    Computer Science and Mathematics Division, Oak Ridge National Laboratory, Oak Ridge, TN, 37831
  }, 
Jingwei Hu\thanks{
  Department of Applied Mathematics, University of Washington, Seattle, WA, 98195
  }, and 
Jeffrey R. Haack\thanks{
  Computer, Computational, and Statistical Sciences Division, Los Alamos National Laboratory, Los Alamos, NM 87545
  }
}
\begin{document}
    \pagenumbering{arabic}
    
    \maketitle

    \begin{abstract}
\noindent A multi-species Fokker-Planck model for simulating particle collisions in a plasma is presented.  The model includes various parameters that must be tuned.  Under reasonable assumptions on these parameters, the model satisfies appropriate conservation laws, dissipates an entropy, and satisfies an $\mathcal{H}$-Theorem.  In addition, the model parameters provide the additional flexibility that is used to match simultaneously momentum and temperature relaxation formulas derived from the Boltzmann collision operator for a binary mixture with Coulomb potentials.   A numerical method for solving the resulting space-homogeneous kinetic equation is presented and two examples are provided to demonstrate the relaxation of species bulk velocities and temperatures to their equilibrium values.
    \end{abstract}
    
\noindent\textbf{Keywords:} Multi-species kinetic models, Fokker-Planck, Lenard-Bernstein, Dougherty, moment equations, entropy, conservation laws, plasma physics
    
    \section{Introduction}
    
In this paper, we propose and analyze a multi-species version of a kinetic collision model for plasma physics applications, often referred to as the Lenard-Bernstein \cite{lenard1958plasma} or Dougherty model \cite{dougherty1964model}.  
The single-species version of this model (which we refer to as LB) is a nonlinear Fokker-Planck operator that is used as a computationally cheaper surrogate for the full Landau-Fokker-Planck (LFP) operator \cite{landau1958kinetic}.  
Roughly speaking, the LB model replaces the nonlocal advection and diffusion terms of the LFP  operator by parameters that depend only on the bulk velocity and temperature of the underlying kinetic distribution \cite{liboff2003kinetic}, while maintaining the conservation, entropy dissipation, and equilibrium properties of the original LFP operator.  
In this way, the approximation of LFP by LB is similar in spirit to the approximation of the Boltzmann collision operator (see e.g, \cite{cercignani2013mathematical}) by the Bhatnagar-Gross-Krook (BGK) model \cite{bhatnagar1954model}.

Recently, there have been efforts to extend the LB operator to the multi-species case \cite{ulbl2022implementation,francisquez2022improved} (see \cite{HJZ24} for the well-posedness result of a similar model) in a manner that maintains relevant (global) conservation properties, entropy dissipation, and an $\cH$-Theorem that uniquely characterizes equilibrium states.  
These multi-species models are written as a sum of Fokker-Planck operators, with coefficients that can be interpreted as tunable collision frequencies.  
Each Fokker-Planck term is parameterized by species bulk velocity and temperature in the case of intra-species collisions or by mixture values for the bulk velocity and temperature in the case of inter-species collisions.  
The choice of mixture values uses the strategy for the multi-species BGK (M-BGK) model introduced in \cite{Haack2017}, which preserves the aforementioned structural properties.  
However, like \cite{Haack2017}, the models in \cite{ulbl2022implementation,francisquez2022improved}, can only match the pairwise relaxation rate (derived from the Boltzmann collision operator for a binary mixture with Coulomb potential)
\footnote{This matching procedure is relevant since for small angle collisions, the Boltzmann collision operator for a binary mixture with Coulomb potential is approximated by the LFP operator \cite{GAMBA201440}.}
of either the bulk velocity or temperature, but not both at the same time.  
This limitation is a consequence of the fact that mixture values for the bulk velocity and temperature are assumed to be symmetric with respect to a permutation in indices between two species.

A more general M-BGK model was introduced in \cite{klingenberg2017consistent} (see also \cite{bobylev2018general}) that does not make the symmetry assumption in \cite{Haack2017}.
Rather, the model in \cite{klingenberg2017consistent}, which includes the one in \cite{Haack2017} as a special case, introduces additional parameters in the mixture values to enable matching of pairwise bulk velocity and temperature relaxation rates at the same time.  
In this work, we take a similar approach and construct a more general multi-species LB (M-LB) model that satisfies the requisite conservation, entropy dissipation, and equilibrium ($\cH$-Theorem) properties, but contains additional parameters for matching of velocity and temperature relaxation rates at the same time.
The parameters however, have constraints; thus a principled approach for matching must be carried out.  
We develop a procedure for doing so, matching the rates of the M-LB model with those derived from the Boltzmann collision operator with Coulomb potentials.  
We then illustrate the application of the matching procedure with several numerical experiments.

The remainder of the paper is organized as follows. 
In \Cref{section:MLB}, we introduce the M-LB model.   
In \Cref{section:conservationLaws}, we review conservation laws and derive required physically relevant conditions on the mixture parameters for the bulk velocity and temperature.  
In \Cref{section:HTheorem}, we prove an entropy dissipation law and an $\cH$-Theorem.  
In \Cref{section:matchingRelaxationRates}, we describe the procedure for matching the pairwise relaxation rates to the multi-species Boltzmann equation with Coulomb potentials.  
In \Cref{section:numerics}, we present an implicit method for solving the homogeneous M-LB equations, which includes an iterative procedure for updating the moments when presented with moment-dependent collision rates.   
In \Cref{section:numericalExamples}, we present some preliminary numerical results for the space homogeneous problem.   
In the appendix we present proofs used in the main body of the paper, and provide comparisons to the recent work in \cite{pirner2024consistent}.

    \section{Multi-species space homogeneous Lenard-Bernstein equation}
    \label{section:MLB}
    
Let $f_i(t,\bfv)$ denote the kinetic distribution of particles of species $i\in\{1,\cdots,N\}$ having mass $m_i$.
Associated to each $f_i$ are the species number density $n_i$, mass density $\rho_i$, bulk velocity $\bfu_i$, and temperature $T_i$, defined by
    \begin{equation}
\rho_i = m_in_i = m_i\intd f_i\dv
    ,\;\;\;
\bfu_i = \frac{1}{\rho_i}\intd m_i\bfv f_i\dv
    ,\;\;\;
T_i = \frac{m_i}{n_i d}\intd\left|\bfv-\bfu_i\right|^2f_i\dv
.
    \end{equation}
The species energy densities, $E_i$, are given by
    \begin{equation}
    \label{eq:energyDefinition}
E_i
  = \frac{1}{2}\intd m_i|\bfv|^2f_i\dv
  = \frac{1}{2}\rho_i|\bfu_i|^2 + \frac{d}{2}n_i T_i
.
    \end{equation}
    
    \begin{defn}
For $n>0$, $\bfu \in \bbR^d$, and $\theta>0$, define the Maxwellian
    \begin{equation}
    \label{eq:maxwellianDefinition}
M_{n,\bfu,\theta}
  = 
    \frac{n}{(2\pi\theta)^{d/2}}
    \exp
    \left(
      -\frac{|\bfv-\bfu|^2}{2\theta}
    \right)
.
    \end{equation}
    \end{defn}
    
    \begin{lemma}
For a general Maxwellian $M_{n,\bfu,\theta}$,
    \begin{gather}
\intd M_{n,\bfu,\theta}\dv = n,
    \qquad 
\intd \bfv M_{n,\bfu,\theta}\dv = n\bfu, 
    \\
    \label{eq:DeltaM_identity}
\intd |\bfv|^2M_{n,\bfu,\theta} \dv = n|\bfu|^2+dn\theta,
\qquand
\Delta_{\bfv}\log(M_{n,\bfu,\theta})
  =
    -\frac{d}{\theta}
.
    \end{gather}
    \end{lemma}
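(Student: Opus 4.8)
The plan is to reduce all four identities to standard Gaussian integrals through the affine change of variables $\bfw = (\bfv-\bfu)/\sqrt{\theta}$, under which $M_{n,\bfu,\theta}\dv = n(2\pi)^{-d/2}\exp(-|\bfw|^2/2)\,\diff\bfw$, the Jacobian factor $\theta^{d/2}$ exactly canceling the $\theta$-dependence in the normalizing constant.

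First I would verify the normalization: after the substitution, $\intd M_{n,\bfu,\theta}\dv = n\intd (2\pi)^{-d/2}\exp(-|\bfw|^2/2)\,\diff\bfw$, and the latter integral factors as a product of $d$ copies of the one-dimensional Gaussian integral $\int_{\bbR}(2\pi)^{-1/2}\exp(-w^2/2)\,\diff w = 1$, which follows from the classical identity $\int_{\bbR}\exp(-x^2)\,\diff x = \sqrt{\pi}$. Next, for the first moment I would write $\bfv = \bfu + \sqrt{\theta}\,\bfw$ inside the integral: the $\bfu$ term contributes $\bfu\intd M_{n,\bfu,\theta}\dv = n\bfu$ by the normalization just established, while the $\sqrt{\theta}\,\bfw$ term vanishes because $\bfw\exp(-|\bfw|^2/2)$ is odd in each coordinate.

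For the second-moment identity I would expand $|\bfv|^2 = |\bfu|^2 + 2\bfu\cdot(\bfv-\bfu) + |\bfv-\bfu|^2$. Integrating term by term, the first contributes $n|\bfu|^2$ by the normalization, the cross term vanishes by the same oddness argument, and the third, after the change of variables, equals $n\theta\intd |\bfw|^2(2\pi)^{-d/2}\exp(-|\bfw|^2/2)\,\diff\bfw$; splitting $|\bfw|^2 = \sum_{k=1}^d w_k^2$ and using that each one-dimensional integral $\int_{\bbR}w_k^2(2\pi)^{-1/2}\exp(-w_k^2/2)\,\diff w_k = 1$ (integration by parts, i.e.\ the variance of a standard Gaussian), this is $dn\theta$. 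Summing gives $\intd |\bfv|^2 M_{n,\bfu,\theta}\dv = n|\bfu|^2 + dn\theta$.

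Finally, the Laplacian identity is immediate from the explicit form of the logarithm: $\log M_{n,\bfu,\theta} = \log\big(n/(2\pi\theta)^{d/2}\big) - |\bfv-\bfu|^2/(2\theta)$, and since the first summand is independent of $\bfv$ we get $\Delta_{\bfv}\log M_{n,\bfu,\theta} = -(2\theta)^{-1}\Delta_{\bfv}|\bfv-\bfu|^2 = -(2\theta)^{-1}(2d) = -d/\theta$. There is no genuine obstacle here; the only points requiring a little care are the evaluation of the one-dimensional Gaussian moments and bookkeeping the Jacobian $\theta^{d/2}$ so that it cancels the normalizing constant rather than leaving a spurious power of $\theta$.
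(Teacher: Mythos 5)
Your proof is correct and complete: the paper states this lemma without proof, treating it as a routine Gaussian computation, and your argument (reduction to standard one-dimensional Gaussian moments via the change of variables $\bfw=(\bfv-\bfu)/\sqrt{\theta}$, the decomposition $|\bfv|^2=|\bfu|^2+2\bfu\cdot(\bfv-\bfu)+|\bfv-\bfu|^2$, and the direct Laplacian of $\log M_{n,\bfu,\theta}$) is exactly the standard calculation the authors had in mind. No gaps.
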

    
    \begin{defn}
The (space homogeneous) multi-species Lenard Bernstein (M-LB) model is 
    \begin{equation}
    \label{eq:MLB}
\frac{\partial f_i}{\partial t}
  = 
    \sum_{j=1}^N \cC_{i,j}(f_i,f_j)
,\qquad i=1,\cdots,N
,
    \end{equation}
where, for $1 \leq i,j \leq N$, the pairwise collision operators $\cC_{i,j}$ are given by
    \begin{subequations}
    \label{eq:MLB_operator}
    \begin{align}
\cC_{i,j}(f_i,f_j)
  &\coloneqq
    \lambda_{i,j}\theta_{i,j}\nabla_{\bfv}\cdot
      \left(
        \frac{\bfv-\bfu_{i,j}}{\theta_{i,j}}f_i
        +
        \nabla_{\bfv} f_i
      \right)
  = 
    \lambda_{i,j} \theta_{i,j} \cL(f_i, M_{i,j})
,
   \\
\cL(f,M)
  &\coloneqq
    \nabla_{\bfv}\cdot
      \left(
        M \nabla_{\bfv} 
        \left(
          \frac{f}{M}
        \right)
      \right)
  = 
    \nabla_{\bfv}\cdot
      \left[
        \nabla_{\bfv} f - f \nabla_{\bfv} (\log(M)) 
      \right]
.
    \end{align}    
    \end{subequations}
The parameters $\lambda_{i,j} \geq 0$ are the frequencies of collisions between species $i$ and $j$.  
The mixture Maxwellians are given by
    \begin{equation}
    \label{eq:mixtureMaxwellian_and_theta_ij_def}
M_{i,j}
  \coloneqq
    M_{n_i,\bfu_{i,j},\theta_{i,j}}
,\qqwhere
\theta_{i,j}
  \coloneqq
    T_{i,j}/m_i
,
    \end{equation}
and the mixture Maxwellian parameters are given by
    \begin{subequations}
    \label{eq:u_ij_and_T_ij_defs}
    \begin{align}
    \label{eq:u_ij_def}
\bfu_{i,j}
  &\coloneqq
    \alpha_{i,j}\bfu_i+(1-\alpha_{i,j})\bfu_j
,
    \\
    \label{eq:T_ij_def}
T_{i,j}
  &\coloneqq
    \beta_{i,j}T_i+(1-\beta_{i,j})T_j
    +\frac{1}{d}\gamma_{i,j}|\bfu_i-\bfu_j|^2
,
    \end{align}    
    \end{subequations}
where the parameters $\alpha_{i,j}$, $\beta_{i,j}$, and $\gamma_{i,j}$ are to be determined.
    \end{defn}
    
    \begin{remark}
When $i = j$, we often drop the second index.
Thus, for example, $M_{i,i} = M_i$ is the unique Maxwellian associated to $f_i$ with bulk velocity $\bfu_{i,i} = \bfu_i$ and temperature $T_{i,i} = T_i$.
    \end{remark}

    \section{Conservation laws and mixture parameter constraints}
    \label{section:conservationLaws}
    
Throughout the paper, we make the following assumption, which allows us to use the chain rule for time derivatives, apply integration by parts in velocity, and, when doing the latter, set any boundary terms that arise to zero.
While more precise conditions can be formulated, we do not do so here.

    \begin{ass}
For each $i\in\{1,\cdots,N\}$, $f_i>0$ is a smooth function of $\bfv$ and continuously differentiable in $t$.
For each $t$, $f_i$ decays exponentially to zero as $|\bfv| \to \infty$.
    \end{ass}

The M-LB model satisfies conservation of species mass:  $m_i\intd\cC_{i,j}(f_i,f_j)\dv=0$.
We assume further that it satisfies the pairwise conservation of  momentum and energy.

    \begin{ass}
    \label{ass:conservation}
The collision operators in \eqref{eq:MLB} satisfy the following invariance properties, which correspond to pairwise momentum conservation and pairwise energy conservation, respectively.
Specifically, for any $i,j$,
    \begin{subequations}
    \begin{align}
    \label{eq:cLaw2}
m_i\vint{\bfv\cC_{i,j}(f_i,f_j)}
+
m_j\vint{\bfv\cC_{j,i}(f_j,f_i)}
    &= \mathbf{0}
,
    \\ 
    \label{eq:cLaw3}
m_i\Vint{\frac{|\bfv|^2}{2}\cC_{i,j}(f_i,f_j)}
+
m_j\Vint{\frac{|\bfv|^2}{2}\cC_{j,i}(f_j,f_i)}
    &= 0
,
    \end{align}
    \end{subequations} 
where brackets are used as a shorthand for velocity integration, i.e.,
$\vint{\bfg} = \intd {\bfg}({\bfv})\diff{\bfv}$.
    \end{ass}
Evaluation of these conservation laws relies on the following lemma.
    \begin{lemma}
    \label{lemma:momentProperties}
For every $i,j\in\{1,\cdots,N\}$,
    \begin{subequations}
    \label{eq:momentProperties}
    \begin{align}
    \label{eq:momentProperty1}
\vint
  {m_i\cL(f_i,M_{i,j})}
  &= 0
,
    \\
    \label{eq:momentProperty2}
\vint
  {m_i\bfv\cL (f_i,M_{i,j})}
  &= 
    -\frac
        {\rho_i(\bfu_i - \bfu_{i,j})}
        {\theta_{i,j}}
,
    \\
    \label{eq:momentProperty3}
\Vint
  {m_i\frac{|\bfv|^2}{2} \cL(f_i,M_{i,j})}
  &= 
    -\frac
      {\rho_i \bfu_i \cdot (\bfu_i - \bfu_{i,j})}
      {\theta_{i,j}}
    -\frac
      {d \rho_i (\theta_i - \theta_{i,j})}
      {\theta_{i,j}}
.
    \end{align}
    \end{subequations}
    \end{lemma}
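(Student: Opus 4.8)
\emph{Proof proposal.}
The plan is to reduce all three identities to integration by parts in velocity against polynomial weights, after rewriting $\cL(f_i,M_{i,j})$ in divergence form. Differentiating the Maxwellian \eqref{eq:maxwellianDefinition} with $M_{i,j} = M_{n_i,\bfu_{i,j},\theta_{i,j}}$ gives $\nabla_{\bfv}\log M_{i,j} = -(\bfv-\bfu_{i,j})/\theta_{i,j}$, so the second formula for $\cL$ in \eqref{eq:MLB_operator} reads $\cL(f_i,M_{i,j}) = \nabla_{\bfv}\cdot\bfg_i$, where $\bfg_i \coloneqq \nabla_{\bfv}f_i + \theta_{i,j}^{-1}(\bfv-\bfu_{i,j})f_i$. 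Under the running assumption, $f_i$, $\nabla_{\bfv}f_i$, and hence $\bfg_i$ decay exponentially as $|\bfv|\to\infty$, so every boundary term produced below vanishes.

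Identity \eqref{eq:momentProperty1} is then immediate: $\vint{m_i\cL(f_i,M_{i,j})} = m_i\intd\nabla_{\bfv}\cdot\bfg_i\dv = 0$ by the divergence theorem. For \eqref{eq:momentProperty2} and \eqref{eq:momentProperty3}, I would integrate by parts componentwise to move the divergence onto the polynomial weight, using that the Jacobian of $\bfv$ is the identity and $\nabla_{\bfv}(|\bfv|^2/2) = \bfv$. This gives
\[
\vint{m_i\bfv\,\cL(f_i,M_{i,j})} = -m_i\intd \bfg_i\dv,
\qquad
\Vint{m_i\tfrac{|\bfv|^2}{2}\,\cL(f_i,M_{i,j})} = -m_i\intd \bfv\cdot\bfg_i\dv .
\]

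It then remains to evaluate elementary velocity moments of $f_i$. From the definitions of $n_i$, $\bfu_i$ and from \eqref{eq:energyDefinition} together with $\theta_i = T_i/m_i$, one has $\intd f_i\dv = n_i$, $\intd\bfv f_i\dv = n_i\bfu_i$, and $\intd|\bfv|^2 f_i\dv = 2E_i/m_i = n_i|\bfu_i|^2 + dn_i\theta_i$; in addition $\intd\nabla_{\bfv}f_i\dv = \mathbf{0}$ and, after one more integration by parts, $\intd\bfv\cdot\nabla_{\bfv}f_i\dv = -d\,n_i$. Substituting $\bfg_i = \nabla_{\bfv}f_i + \theta_{i,j}^{-1}(\bfv-\bfu_{i,j})f_i$ into the first display and using $\rho_i = m_i n_i$ yields $-m_i\theta_{i,j}^{-1} n_i(\bfu_i-\bfu_{i,j})$, which is \eqref{eq:momentProperty2}. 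For the second display the same substitution gives $d\rho_i - \theta_{i,j}^{-1}\rho_i\big[\bfu_i\cdot(\bfu_i-\bfu_{i,j}) + d\theta_i\big]$, and combining $d\rho_i$ with $-d\rho_i\theta_i/\theta_{i,j}$ into $-d\rho_i(\theta_i-\theta_{i,j})/\theta_{i,j}$ produces \eqref{eq:momentProperty3}.

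None of these steps is genuinely difficult; the computation is essentially bookkeeping. The one place to be careful is that $f_i$ is \emph{not} assumed to be a Maxwellian, so the second-moment identity $\intd|\bfv|^2 f_i\dv = n_i|\bfu_i|^2 + dn_i\theta_i$ must be read off from the definition of $E_i$ in \eqref{eq:energyDefinition} rather than from the Maxwellian lemma, and the final algebraic rearrangement for \eqref{eq:momentProperty3} should be carried out term by term to keep the signs straight.
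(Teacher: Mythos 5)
Your proposal is correct and is exactly the "straightforward, but tedious computation, using integration by parts several times" that the paper invokes without spelling out: rewriting $\cL(f_i,M_{i,j})$ in divergence form, moving derivatives onto the polynomial weights, and evaluating the resulting moments of $f_i$. The intermediate quantities and the final algebra all check out, including the careful use of the definition of $E_i$ rather than a Maxwellian identity for the second moment.
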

    
    \begin{proof}
The proof is a straightforward, but tedious computation, using integration by parts several times.
    \end{proof}

    \begin{prop}
    \label{prop:parameters_ji}
Given the definitions of $\bfu_{i,j}$ and $T_{i,j}$ in \eqref{eq:u_ij_and_T_ij_defs}, the pairwise conservation of momentum and energy in \Cref{ass:conservation} holds if and only if the following parameter relationships hold:
    \begin{subequations}
    \label{eq:ji_parameter_defs}
    \begin{align}
\alpha_{j,i}
  &=
    1-\frac{\rho_i\lambda_{i,j}}{\rho_j\lambda_{j,i}}(1-\alpha_{i,j})
,
    \\
    \label{eq:beta_ji_def}
\beta_{j,i}
  &=
    1-\frac{n_i\lambda_{i,j}}{n_j\lambda_{j,i}}(1-\beta_{i,j})
,
    \\
    \label{eq:gamma_ji_def}
\gamma_{j,i}
  &=
    \frac{1}{n_j\lambda_{j,i}}(\rho_i\lambda_{i,j}(1-\alpha_{i,j})-n_i\lambda_{i,j}\gamma_{i,j})
.
    \end{align}
    \end{subequations}
    \end{prop}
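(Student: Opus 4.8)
The plan is to substitute the explicit moment formulas from \Cref{lemma:momentProperties} into the two pairwise conservation identities of \Cref{ass:conservation} and then read off the constraints on the parameters. Since $\cC_{i,j}(f_i,f_j) = \lambda_{i,j}\theta_{i,j}\cL(f_i,M_{i,j})$, equation \eqref{eq:momentProperty2} gives
\begin{equation}
m_i\vint{\bfv\,\cC_{i,j}(f_i,f_j)} = \lambda_{i,j}\theta_{i,j}\cdot\left(-\frac{\rho_i(\bfu_i-\bfu_{i,j})}{\theta_{i,j}}\right) = -\lambda_{i,j}\rho_i(\bfu_i-\bfu_{i,j}),
\end{equation}
and using \eqref{eq:u_ij_def} we have $\bfu_i-\bfu_{i,j} = (1-\alpha_{i,j})(\bfu_i-\bfu_j)$, so the momentum term is $-\lambda_{i,j}\rho_i(1-\alpha_{i,j})(\bfu_i-\bfu_j)$. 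Writing the analogous expression for the $(j,i)$ term (which carries $(\bfu_j-\bfu_i) = -(\bfu_i-\bfu_j)$), the momentum conservation law \eqref{eq:cLaw2} becomes the scalar-coefficient identity $\lambda_{i,j}\rho_i(1-\alpha_{i,j}) = \lambda_{j,i}\rho_j(1-\alpha_{j,i})$ multiplying the common vector $\bfu_i-\bfu_j$; solving for $\alpha_{j,i}$ yields the first relation in \eqref{eq:ji_parameter_defs}. Here I should note that the vector $\bfu_i-\bfu_j$ may vanish, so strictly the ``only if'' direction requires the identity to hold for all admissible states, or equivalently one argues it must hold as an identity in the parameters; this is the one place needing a word of care.

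For the energy law I would first simplify \eqref{eq:momentProperty3}. Multiplying by $\lambda_{i,j}\theta_{i,j}$ clears the denominators:
\begin{equation}
m_i\Vint{\tfrac{|\bfv|^2}{2}\cC_{i,j}(f_i,f_j)} = -\lambda_{i,j}\rho_i\bfu_i\cdot(\bfu_i-\bfu_{i,j}) - d\lambda_{i,j}\rho_i(\theta_i-\theta_{i,j}).
\end{equation}
Recalling $\theta_{i,j} = T_{i,j}/m_i$ and $\theta_i = T_i/m_i$, the second piece is $-d\lambda_{i,j}n_i(T_i - T_{i,j})$. Substituting \eqref{eq:u_ij_def} and \eqref{eq:T_ij_def}, the bracket $\bfu_i-\bfu_{i,j} = (1-\alpha_{i,j})(\bfu_i-\bfu_j)$ again, while $T_i - T_{i,j} = (1-\beta_{i,j})(T_i-T_j) - \tfrac1d\gamma_{i,j}|\bfu_i-\bfu_j|^2$. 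Adding the $(i,j)$ and $(j,i)$ contributions and grouping by the independent quantities $T_i-T_j$, $|\bfu_i-\bfu_j|^2$, and $(\bfu_i+\bfu_j)\cdot(\bfu_i-\bfu_j)$ (or, equivalently, treating $\bfu_i$, $\bfu_j$, $T_i$, $T_j$ as free), I would collect coefficients. The $T_i-T_j$ terms force $n_i\lambda_{i,j}(1-\beta_{i,j}) = n_j\lambda_{j,i}(1-\beta_{j,i})$, which rearranges to \eqref{eq:beta_ji_def}. The remaining velocity-dependent terms, after using the already-established $\alpha_{j,i}$ relation to handle the $\bfu_i\cdot(\bfu_i-\bfu_{i,j})$ and $\bfu_j\cdot(\bfu_j-\bfu_{j,i})$ pieces, collapse to a single scalar identity in $|\bfu_i-\bfu_j|^2$; setting its coefficient to zero and solving for $\gamma_{j,i}$ gives \eqref{eq:gamma_ji_def}.

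The converse direction is immediate: substituting the three relations \eqref{eq:ji_parameter_defs} back into the summed moment expressions makes every coefficient cancel identically, so \eqref{eq:cLaw2} and \eqref{eq:cLaw3} hold. The main obstacle is purely bookkeeping — correctly expanding $\bfu_i\cdot(\bfu_i-\bfu_{i,j}) + \bfu_j\cdot(\bfu_j-\bfu_{j,i})$ in the energy identity and verifying that, once $\alpha_{j,i}$ is fixed by momentum conservation, the leftover terms really do reduce to a clean multiple of $|\bfu_i-\bfu_j|^2$ rather than leaving a spurious cross term; tracking the $\alpha$-dependence through that cancellation is the part most prone to sign or index errors. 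Everything else is a direct application of \Cref{lemma:momentProperties} followed by matching coefficients of linearly independent moment combinations.
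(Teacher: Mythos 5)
Your proposal is correct and follows essentially the same route as the paper: substitute the moment identities of \Cref{lemma:momentProperties} into the pairwise conservation laws, use the definitions \eqref{eq:u_ij_and_T_ij_defs}, and match coefficients of $\bfu_i-\bfu_j$, $T_i-T_j$, and $|\bfu_i-\bfu_j|^2$ to extract \eqref{eq:ji_parameter_defs}, with the converse by reversing the computation. Your explicit remark that the ``only if'' direction needs the conservation identities to hold for all admissible states (so the coefficient matching is legitimate) is a sensible point of care that the paper leaves implicit.
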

    
    \begin{proof}
We show that \Cref{ass:conservation} implies \eqref{eq:ji_parameter_defs}.
The converse follows by simply reversing the order of computations.
According to \eqref{eq:momentProperty2}, the conservation of momentum \eqref{eq:cLaw2} implies that 
    \begin{equation}
\rho_i\lambda_{i,j} \bfu_{i,j} + \rho_j\lambda_{j,i} \bfu_{j,i}
  =
    \rho_i\lambda_{i,j}\bfu_{i} + \rho_j\lambda_{j,i}\bfu_{j}
.
    \end{equation}
Rearranging this equation to isolate $\bfu_{j,i}$ and applying \eqref{eq:u_ij_def} yields \eqref{eq:beta_ji_def}.
By \eqref{eq:momentProperty3}, the conservation of energy \eqref{eq:cLaw3} implies that
    \begin{equation}
n_i\lambda_{i,j}T_{i,j} + n_j\lambda_{j,i}T_{j,i} 
  = 
    \frac{1}{d}
    \rho_i\lambda_{i,j}(1-\alpha_{i,j})
      |\bfu_i - \bfu_j|^2
    + n_i\lambda_{i,j}T_{i}
    + n_j\lambda_{j,i}T_{j}
.
    \end{equation}
Rearranging this equation to isolate $T_{j,i}$ and using \eqref{eq:T_ij_def} yields \eqref{eq:beta_ji_def} and \eqref{eq:gamma_ji_def}.
    \end{proof}
    
    \begin{remark}
The relations in \eqref{eq:ji_parameter_defs} can be rearranged as follows, in order to highlight the symmetries in the model:
    \begin{subequations}
    \label{eq:modelSymmetries}
    \begin{align}
    \label{eq:deltaSymmetry}
\delta_{i,j}
  \coloneqq
    \rho_i\lambda_{i,j}(1-\alpha_{i,j})
  &=
    \rho_j\lambda_{j,i}(1-\alpha_{j,i})
  \eqqcolon
    \delta_{j,i},
    \\
    \label{eq:betaSymmetry}
n_i\lambda_{i,j}(1-\beta_{i,j})
  &=
    n_j\lambda_{j,i}(1-\beta_{j,i}),
    \\
    \label{eq:gammaSymmetry}
\delta_{i,j} - 2n_i\lambda_{i,j}\gamma_{i,j}
  &= 
    2n_j\lambda_{j,i}\gamma_{j,i} - \delta_{j,i}
.
    \end{align}
    \end{subequations}
    \end{remark}

    \begin{remark}
If we constrain the mixture velocity terms to enforce the symmetry $\bfu_{i,j} = \bfu_{j,i}$, then $\alpha_{i,j}+\alpha_{j,i}=1$, and 
    \begin{equation}
\bfu_{i,j}
  = 
    \alpha_{i,j}\bfu_i+\alpha_{j,i}\bfu_j
,\qqwhere
\alpha_{i,j}
  =
    \frac{\rho_i\lambda_{i,j}}{\rho_i\lambda_{i,j}  + \rho_j\lambda_{j,i}}
.
    \end{equation}
This is the same expression for the mixture velocity as in \cite{Haack2017}.
If, in addition, we require $T_{i,j} = T_{j,i}$, then $\beta_{i,j}+\beta_{j,i}=1$, $\gamma_{i,j}=\gamma_{j,i}$, and
    \begin{equation}
T_{i,j}
  = 
    \beta_{i,j}T_i
    +
    \beta_{j,i}T_j
    +
    \frac{1}{d}
    \frac
      {\delta_{i,j} |\bfu_i - \bfu_j|^2}
      {\lambda_{i,j} n_i + \lambda_{j,i} n_j}
,    
    \end{equation}
where
    \begin{equation}
\beta_{i,j}
  = 
    \frac{n_i\lambda_{i,j}}{n_i\lambda_{i,j}+n_j\lambda_{j,i}}
\qquand
\delta_{i,j}
  =
    \frac
        {\rho_{i}\rho_{j} \lambda_{i,j} \lambda_{j,i}}
        {\rho_i\lambda_{i,j} + \rho_j\lambda_{j,i}}
.
    \end{equation}
In this case
    \begin{align}
T_{i,j} 
  &= 
    \frac
        {n_i\lambda_{i,j}T_i + n_j\lambda_{j,i}T_j}
        {n_i\lambda_{i,j} + n_j\lambda_{j,i}}
    +
    \frac{1}{d}
    \frac
        { \rho_i\rho_j\lambda_{i,j}\lambda_{j,i}}
        {(\rho_i\lambda_{i,j} + \rho_j\lambda_{i,j})(n_i\lambda_{i,j} + n_j\lambda_{j,i})}
    |\bfu_i - \bfu_j|^2
,
    \end{align}
which is the mixture temperature in \cite{Haack2017}.  
    \end{remark}

The following moment ODE system, describing the time evolution of the momenta, and energies can be derived (see \Cref{appendix:modelRelaxationRates}) from the model equation \eqref{eq:MLB}, by integrating against $m_i(\bfv,|\bfv|/2)^\top$, and using \eqref{eq:momentProperties} and the definitions in \eqref{eq:u_ij_and_T_ij_defs}:
    \begin{subequations}
    \label{eq:momentODEsystem}
    \begin{align}
    \label{eq:momentumODE}
\frac{\partial(\rho_i\bfu_i)}{\partial t}
  &=
    \sum_j\rho_i\lambda_{i,j}(1-\alpha_{i,j})(\bfu_j-\bfu_i)
,
    \\
    \label{eq:energyODE}
\frac{\partial E_i}{\partial t}
  &=
    d\sum_jn_i\lambda_{i,j}(1-\beta_{i,j})(T_j-T_i)
    \\
  &\quad+
\!\sum_j
      \bigg[
        (\delta_{i,j}-n_i\lambda_{i,j}\gamma_{i,j})
        \bfu_i\cdot(\bfu_j-\bfu_i)
        -
        (\delta_{j,i}-n_j\lambda_{j,i}\gamma_{j,i})
        \bfu_j\cdot(\bfu_i-\bfu_j)
      \bigg]
.
    \nonumber
    \end{align}
    \end{subequations}
Using \eqref{eq:momentProperties} and \eqref{eq:momentumODE}, the following ODE describing the temperature can be derived:
    \begin{equation}
\frac{d}{2}
\frac{\partial(n_iT_i)}{\partial t}
  =
    d\sum_jn_i\lambda_{i,j}(1-\beta_{i,j})(T_j-T_i) 
    +
    \sum_jn_i\lambda_{i,j}\gamma_{i,j}|\bfu_i-\bfu_j|^2
.
    \end{equation}

    \begin{lemma}
The ODE system \eqref{eq:momentODEsystem} satisfies the conservation of total momentum and total energy:
    \begin{equation}
    \label{eq:cons_total_momentum_and_energy}
\sum_i\frac{\partial(\rho_i\bfu_i)}{\partial t}
  = \mathbf{0}
\qquand
\sum_i\frac{\partial E_i}{\partial t}  
  = 0
,
    \end{equation}
and the quantities
    \begin{equation}
    \label{eq:u_inf_and_T_inf_defs}
\bfu^\infty
  = \frac{\sum_i\rho_i\bfu_i}{\sum_i\rho_i}\in\bbR^d
\qquand
T^\infty
  = 
    \frac{\sum_in_iT_i}{\sum_in_i}
    +
    \frac{
        \sum_i\rho_i
        \left(
          |\bfu_i|^2-|\bfu^\infty|^2
        \right)
        }
        {d\sum_in_i}
    >0
    \end{equation}
are independent of time.
    \end{lemma}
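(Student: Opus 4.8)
The plan is to verify the two conservation identities in \eqref{eq:cons_total_momentum_and_energy} by summing the moment ODEs over $i$, and then to deduce time-independence of $\bfu^\infty$ and $T^\infty$ directly from those identities.

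First, for total momentum, I would sum \eqref{eq:momentumODE} over $i$ to get $\sum_i \partial_t(\rho_i \bfu_i) = \sum_{i,j} \rho_i \lambda_{i,j}(1-\alpha_{i,j})(\bfu_j - \bfu_i) = \sum_{i,j} \delta_{i,j}(\bfu_j-\bfu_i)$, using the definition $\delta_{i,j} = \rho_i \lambda_{i,j}(1-\alpha_{i,j})$ from \eqref{eq:deltaSymmetry}. The symmetry $\delta_{i,j} = \delta_{j,i}$ makes each term antisymmetric under swapping $i \leftrightarrow j$, so the double sum vanishes. (Equivalently, this is the standard observation that $\sum_{i,j} a_{i,j}(b_j - b_i) = 0$ whenever $a_{i,j} = a_{j,i}$.) For total energy, sum \eqref{eq:energyODE} over $i$: the first block $d\sum_{i,j} n_i \lambda_{i,j}(1-\beta_{i,j})(T_j - T_i)$ vanishes by the symmetry $n_i\lambda_{i,j}(1-\beta_{i,j}) = n_j\lambda_{j,i}(1-\beta_{j,i})$ in \eqref{eq:betaSymmetry}, by the same antisymmetry argument. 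For the second block, write $c_{i,j} := \delta_{i,j} - n_i \lambda_{i,j}\gamma_{i,j}$, so the summand is $c_{i,j}\, \bfu_i\cdot(\bfu_j-\bfu_i) - c_{j,i}\,\bfu_j\cdot(\bfu_i-\bfu_j)$. Relabeling $i\leftrightarrow j$ in the second term of the $(i,j)$ summand shows that $\sum_{i,j}$ of the whole block equals $\sum_{i,j} c_{i,j}\,\bfu_i\cdot(\bfu_j-\bfu_i) - \sum_{i,j} c_{i,j}\,\bfu_i\cdot(\bfu_j-\bfu_i) = 0$; the cancellation is term-by-term once the indices are matched, and does not even require \eqref{eq:gammaSymmetry}. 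This step is where I expect to have to be slightly careful with the relabeling bookkeeping, since the minus sign and the asymmetry between the two $c$-factors make it the one place an error could hide — but it is still just index relabeling, not genuine analysis.

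Next, time-independence of $\bfu^\infty$ is immediate: since $\sum_i \rho_i$ is constant (conservation of species mass, noted before \Cref{ass:conservation}, gives $\partial_t \rho_i = 0$), we have $\partial_t \bfu^\infty = \left(\sum_i \rho_i\right)^{-1} \partial_t\!\left(\sum_i \rho_i \bfu_i\right) = 0$ by the momentum identity just proved. For $T^\infty$, rewrite its numerator using \eqref{eq:energyDefinition}: $\sum_i n_i T_i = \frac{2}{d}\sum_i\!\left(E_i - \tfrac12 \rho_i|\bfu_i|^2\right)$, so
\begin{equation}
\tfrac{d}{2}\!\left(\textstyle\sum_i n_i\right) T^\infty
  = \sum_i E_i - \tfrac12 \sum_i \rho_i|\bfu_i|^2 + \tfrac12\!\left(\textstyle\sum_i \rho_i|\bfu_i|^2 - \textstyle\sum_i \rho_i|\bfu^\infty|^2\right)
  = \sum_i E_i - \tfrac12\!\left(\textstyle\sum_i\rho_i\right)|\bfu^\infty|^2 .
\end{equation}
The right-hand side is time-independent: $\sum_i E_i$ is constant by the energy identity, $\sum_i \rho_i$ is constant by mass conservation, and $|\bfu^\infty|^2$ is constant by the first part. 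Dividing by $\tfrac{d}{2}\sum_i n_i$ (also constant) shows $T^\infty$ is independent of time. Finally, positivity of $T^\infty$: from the displayed identity, $\tfrac{d}{2}(\sum_i n_i) T^\infty = \sum_i E_i - \tfrac12(\sum_i\rho_i)|\bfu^\infty|^2$; using $E_i = \tfrac12\rho_i|\bfu_i|^2 + \tfrac{d}{2}n_i T_i$ and $\sum_i \rho_i |\bfu_i|^2 \ge (\sum_i\rho_i)|\bfu^\infty|^2$ (Jensen / Cauchy–Schwarz, since $\bfu^\infty$ is the $\rho_i$-weighted average of the $\bfu_i$), the right-hand side is at least $\tfrac{d}{2}\sum_i n_i T_i > 0$ because each $T_i > 0$ and each $n_i > 0$. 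Hence $T^\infty > 0$, completing the proof.
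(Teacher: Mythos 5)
Your proof is correct and follows the same route as the paper: sum the moment ODEs over $i$ and use the symmetries \eqref{eq:deltaSymmetry} and \eqref{eq:betaSymmetry} (the velocity cross terms in \eqref{eq:energyODE} indeed cancel by index relabeling alone), then deduce the time-invariance of $\bfu^\infty$ and $T^\infty$ directly from the conserved totals together with $\partial_t\rho_i=0$. The only difference is that where the paper cites \cite[Proposition 4.1]{asymptoticRelaxation} for the positivity of $T^\infty$, you supply the short Jensen/Cauchy--Schwarz argument $\bigl(\sum_i\rho_i\bigr)|\bfu^\infty|^2\leq\sum_i\rho_i|\bfu_i|^2$ directly, which is exactly the standard argument and is fine.
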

    
    \begin{proof}
To verify the conservation of total momentum, sum \eqref{eq:momentumODE} over all species $i\in\{1,\cdots,N\}$ and use the symmetry in \eqref{eq:deltaSymmetry} to obtain
    \begin{equation}
\sum_i\frac{\partial(\rho_i\bfu_i)}{\partial t}
  = 
    \sum_i\sum_j\rho_i\lambda_{i,j}(1-\alpha_{i,j})(\bfu_j-\bfu_i)
  = 
    \mathbf{0}
.
    \end{equation}
Next, to verify the conservation of total energy, sum \eqref{eq:energyODE} over $i\in\{1,\cdots,N\}$ and use \eqref{eq:betaSymmetry}.
Finally, the time invariance of $\bfu^\infty$ and $T^\infty$ follow directly from \eqref{eq:cons_total_momentum_and_energy}, and the positivity of $T_\infty$ follows as in \cite[Proposition 4.1]{asymptoticRelaxation}.
    \end{proof}
    
    \begin{remark}
Following arguments similar, to that of \cite[Theorem 3.2]{asymptoticRelaxation} one can show that the temperatures $T_i$, $i \in \{1,\cdots,N\}$, are bounded below and that $min_{1\leq i \leq N} {T_i}$ is non-decreasing.
Using this fact, the global existence and uniqueness of solutions to the moment system and the convergence for each $i \in \{1,\cdots,N\}$ of $\bfu_i \to \bfu^\infty$ and $T_i \to T^\infty$ as $t \to \infty$ can be established, in a way similar to \cite{asymptoticRelaxation}.
    \end{remark}

    \section{Entropy dissipation and \texorpdfstring{$\cH$}{H}-Theorem}
    \label{section:HTheorem}
    
In this section, we state and prove an entropy dissipation result and an $\cH$-Theorem, under the following physically motivated assumption.
    \begin{ass}
    \label{ass1}
Motivated by physical principles, we assume that for $1 \leq i,j \leq N$,
    \begin{equation}
    \label{eq:ij_parameter_constraints}
0 \leq \alpha_{i,j}\,,\,\beta_{i,j} \leq 1
\qquand
\gamma_{i,j} \geq 0
.
    \end{equation}
    \end{ass}

    \begin{defn}
The total entropy functional for a particle system with kinetic distribution $\bff = [f_1, \cdots f_N]^\top$ is given by 
    \begin{equation}
    \label{eq:Hdefn}
\cH[\bff] 
  = 
    \sum_{i=1}^N\intd (f_i\log(f_i) -f_i)\dv 
  = 
    \sum_{i=1}^N\vint{f_i\log(f_i) - f_i}
.
    \end{equation} 
    \end{defn}
    
    \begin{theorem}
    \label{theorem:HTheorem}
Let $\bff = [f_1, \cdots f_N]^\top$ satisfy the M-LB system
\eqref{eq:MLB}-\eqref{eq:MLB_operator}.
Then under the assumptions in \eqref{eq:ij_parameter_constraints}, $\frac{\partial\cH}{\partial t}\leq 0$, with equality if and only if each $f_i$ is a Maxwellian with common equilibrium velocity $\bfu\in\bbR^d$ and equilibrium temperature $T>0$; that is,
    \begin{align}
    \label{eq:equilibrium}
f_i(\bfv)
  &=
    n_i
    \left(
      \frac{m_i}{2\pi T^\infty}
    \right)^{d/2}
    \exp
    \left(
      -
      \frac{m_i|\bfv-\bfu^\infty|^2}{2T^\infty}
    \right)
,\qquad 1 \leq i \leq N
,
    \end{align}
where $\bfu^\infty$ and $T^\infty$ are defined in \eqref{eq:u_inf_and_T_inf_defs}.
    \end{theorem}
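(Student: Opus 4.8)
The plan is to differentiate $\cH$ along the M-LB flow, reduce each pairwise entropy-production term to a Fisher-information quantity, and then extract a sum of manifestly signed pieces. Differentiating under the integral (valid by the standing regularity hypotheses) and using that $s \mapsto s\log s - s$ has derivative $\log s$, together with \eqref{eq:MLB}--\eqref{eq:MLB_operator},
\[
\frac{\partial\cH}{\partial t}
  = \sum_{i=1}^N \vint{\log f_i\,\partial_t f_i}
  = \sum_{i,j=1}^N \lambda_{i,j}\theta_{i,j}\vint{\log f_i\,\cL(f_i,M_{i,j})}.
\]
I would then evaluate a single term by two integrations by parts in $\bfv$ (boundary contributions vanish by assumption), using the second form of $\cL$ in \eqref{eq:MLB_operator} and the identity $\Delta_\bfv\log M_{i,j} = -d/\theta_{i,j}$ from \eqref{eq:DeltaM_identity}. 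This gives
\[
\vint{\log f_i\,\cL(f_i,M_{i,j})}
  = -\intd\frac{|\nabla_\bfv f_i|^2}{f_i}\dv + \frac{dn_i}{\theta_{i,j}},
\]
and the crucial point is that the mixture velocity $\bfu_{i,j}$ disappears entirely: what remains is the Fisher information of $f_i$ plus a term in which $j$ enters only through $\theta_{i,j}$.

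Next I would record the elementary identity --- obtained by the same integration-by-parts device, writing $\nabla_\bfv\log f_i = \nabla_\bfv\log M_i + \nabla_\bfv\log(f_i/M_i)$, expanding the square, and observing that the cross term integrates to zero by the very definition of $T_i$ ---
\[
\intd\frac{|\nabla_\bfv f_i|^2}{f_i}\dv
  = \frac{dn_i}{\theta_i} + D_i,
\qquad
D_i := \intd f_i\left|\nabla_\bfv\log\frac{f_i}{M_i}\right|^2\dv \ge 0,
\]
with $D_i = 0$ if and only if $f_i = M_i$. Inserting this, recalling $\theta_{i,j} = T_{i,j}/m_i$, summing over $i,j$, replacing $T_i - T_{i,j} = (1-\beta_{i,j})(T_i - T_j) - \tfrac{1}{d}\gamma_{i,j}|\bfu_i - \bfu_j|^2$ from \eqref{eq:T_ij_def}, and symmetrizing the temperature-difference sum in $i \leftrightarrow j$ using the symmetry $b_{i,j} := n_i\lambda_{i,j}(1-\beta_{i,j}) = n_j\lambda_{j,i}(1-\beta_{j,i}) =: b_{j,i}$ from \eqref{eq:betaSymmetry}, I arrive at
\[
\frac{\partial\cH}{\partial t}
  = -\sum_i\Bigl(\sum_j\lambda_{i,j}\theta_{i,j}\Bigr)D_i
    - \frac{d}{2}\sum_{i,j}\frac{b_{i,j}(T_i-T_j)^2}{T_iT_j}
    - \sum_{i,j}\frac{\lambda_{i,j}n_i\gamma_{i,j}}{T_i}|\bfu_i-\bfu_j|^2.
\]
Under \Cref{ass1} each of the three sums is nonnegative (since $\beta_{i,j} \le 1$ gives $b_{i,j} \ge 0$, and $\gamma_{i,j}, \lambda_{i,j}, \theta_i, D_i \ge 0$), so $\partial_t\cH \le 0$. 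For the equality case, $\partial_t\cH = 0$ forces each of the three nonnegative sums to vanish separately: $D_i = 0$ --- hence $f_i = M_i$ --- for every species $i$ that collides with something; $b_{i,j}(T_i - T_j) = 0$ for all $i,j$; and $\lambda_{i,j}\gamma_{i,j}|\bfu_i - \bfu_j|^2 = 0$ for all $i,j$. The reverse implication is immediate: at the state \eqref{eq:equilibrium} one has $\bfu_{i,j} = \bfu^\infty$ and $T_{i,j} = T^\infty$ by \eqref{eq:u_ij_and_T_ij_defs}, so $M_{i,j} = f_i$ and therefore $\cC_{i,j}(f_i,f_j) = \lambda_{i,j}\theta_{i,j}\cL(f_i,f_i) = 0$, whence $\partial_t f_i = 0$ and $\partial_t\cH = 0$.

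I expect the delicate point to be the implication $\partial_t\cH = 0 \Rightarrow \eqref{eq:equilibrium}$: one must promote ``each $f_i$ is its own Maxwellian $M_i$'' to ``all bulk velocities $\bfu_i$ coincide and all temperatures $T_i$ coincide,'' so that $\bff$ reduces to the single state \eqref{eq:equilibrium}. Here one combines the vanishing of the last two sums with the symmetry relations \eqref{eq:modelSymmetries} --- propagating $\bfu_i = \bfu_j$ across every pair with $\rho_i\lambda_{i,j}(1-\alpha_{i,j}) > 0$ and $T_i = T_j$ across every pair with $b_{i,j} > 0$ --- and invokes a mild non-degeneracy/connectivity condition on the parameters $\lambda_{i,j},\alpha_{i,j},\beta_{i,j}$ (in force in the physically relevant regime of \Cref{section:matchingRelaxationRates}) to obtain common values $\bfu_i \equiv \bfu$, $T_i \equiv T$; the conservation of total momentum and energy then identifies $\bfu = \bfu^\infty$ and $T = T^\infty$, and $M_i = M_{n_i,\bfu^\infty,T^\infty/m_i}$ is precisely \eqref{eq:equilibrium}. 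The only other step that is not pure bookkeeping is the reduction above: comparing $\log f_i$ with $\log M_i$ rather than with $\log M_{i,j}$ (the latter does not yield a definite sign after summation) is exactly what exposes the dissipative structure.
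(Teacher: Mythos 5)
Your proposal is correct and follows essentially the same route as the paper: your identity $\vint{\log f_i\,\cL(f_i,M_{i,j})} = -\intd |\nabla_\bfv f_i|^2/f_i\dv + dn_i/\theta_{i,j}$ combined with the split $\intd |\nabla_\bfv f_i|^2/f_i\dv = dn_i/\theta_i + D_i$ is exactly the content of \Cref{lemma:genericMaxwellianIdentity,lemma:mixed_identity} (the mixture Maxwellian enters only through $\theta_{i,j}$, leaving the intra-species Fisher-type dissipation), and your final signed decomposition via \eqref{eq:T_ij_def} and \eqref{eq:betaSymmetry} is the symmetrized, global form of the paper's pairwise identity \eqref{eq:lemmaProofEquation}. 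The non-degeneracy of $\lambda_{i,j}$, $1-\beta_{i,j}$, and the $\gamma$'s that you flag for the ``equality implies common $\bfu$, $T$'' direction is likewise left implicit in the paper's proof, so this is a point of extra care rather than a deviation.
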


    \begin{proof}
By the definition of $\cH$ in \eqref{eq:Hdefn} and the chain rule,
    \begin{equation}
    \begin{split}
\frac{\partial \cH}{\partial t} 
  &= 
    \sum_{i=1}^N \vint{\log(f_i) \frac{\partial f_i}{\partial t}}
    \\
  &=
    \sum_{i=1}^N \vint{\log(f_i) \cC_{i,i}(f_i,f_i)}
      +
    \sum_{i=1}^N \sum_{j=i+1}^N 
      \vint{\log(f_i) \cC_{i,j}(f_i,f_j) + \log(f_j) \cC_{j,i}(f_j,f_i)}
.
   \end{split}
   \end{equation}
Thus it is sufficient to show that 
    \begin{equation}
\vint{\log(f_i) \cC_{i,i}(f_i,f_i)} 
  \leq 0
,
    \end{equation}
with equality if and only if $f_i$ takes the form in \eqref{eq:equilibrium}, and
    \begin{equation}
    \label{eq:pairwiseEntropy}
\vint
    {\log(f_i) \cC_{i,j}(f_i,f_j) + \log(f_j) \cC_{j,i}(f_j,f_i)} 
  \leq 0, 
    \end{equation}
with equality if and only if $f_i$ and $f_j$ take the Maxwellian form in \eqref{eq:equilibrium} with a common bulk velocity $\bfu$ and temperature $T>0$.
We establish these conditions in the remainder of the section.
    \end{proof}

    \begin{lemma}
    \label{lemma:singleSpeciesEntropyIneq}
For the intra-species collisions,
    \begin{equation}
    \label{eq:intra-species_Hthm}
\vint
    { \log (f_i) \cC_{i,i} (f_i,f_i) }
  \leq 0
,
    \end{equation}
with equality if and only if $f_i = M_{i}$.
    \end{lemma}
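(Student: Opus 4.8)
The plan is to use the intra-species structure to collapse $\vint{\log(f_i)\,\cC_{i,i}(f_i,f_i)}$ to a single, manifestly non-positive "relative Fisher information" integral. By \eqref{eq:MLB_operator}, $\cC_{i,i}(f_i,f_i) = \lambda_{i,i}\theta_i\,\cL(f_i,M_i)$ with nonnegative prefactor $\lambda_{i,i}\theta_i$ (recall $\theta_i = T_i/m_i>0$ and $\lambda_{i,i}\geq 0$), so it suffices to analyze $\vint{\log(f_i)\,\cL(f_i,M_i)}$. First I would split $\log f_i = \log(f_i/M_i) + \log M_i$, giving
\[
\vint{\log(f_i)\,\cL(f_i,M_i)}
  = \vint{\log(f_i/M_i)\,\cL(f_i,M_i)} + \vint{\log(M_i)\,\cL(f_i,M_i)}.
\]

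The second term vanishes: $\log M_i$ equals $-|\bfv-\bfu_i|^2/(2\theta_i)$ plus an additive constant, hence is a linear combination of $1$, $\bfv$, and $|\bfv|^2$, so $\vint{\log(M_i)\,\cL(f_i,M_i)}$ is the corresponding linear combination of the three moments in \Cref{lemma:momentProperties} with $j=i$. Since $\bfu_{i,i}=\bfu_i$ and $\theta_{i,i}=\theta_i$, every right-hand side in \eqref{eq:momentProperties} is zero in this case, so this term drops out. (This cancellation is precisely the feature that the cross term \eqref{eq:pairwiseEntropy} lacks, which is why the $i\neq j$ case needs a separate argument.) For the first term I would integrate by parts in $\bfv$, discarding boundary terms by the standing regularity and decay assumption on $f_i$, and use $\nabla_{\bfv}\log(f_i/M_i) = (M_i/f_i)\nabla_{\bfv}(f_i/M_i)$ together with $\cL(f_i,M_i) = \nabla_{\bfv}\cdot(M_i\nabla_{\bfv}(f_i/M_i))$ to get
\[
\vint{\log(f_i/M_i)\,\cL(f_i,M_i)}
  = -\intd M_i\,\nabla_{\bfv}\log(f_i/M_i)\cdot\nabla_{\bfv}(f_i/M_i)\dv
  = -\intd \frac{M_i^2}{f_i}\bigl|\nabla_{\bfv}(f_i/M_i)\bigr|^2\dv \leq 0,
\]
the integrand being nonnegative since $f_i,M_i>0$. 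Multiplying by $\lambda_{i,i}\theta_i\geq 0$ yields \eqref{eq:intra-species_Hthm}.

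For the equality statement, assuming the nondegeneracy $\lambda_{i,i}>0$, equality forces $\nabla_{\bfv}(f_i/M_i)=0$ a.e., so $f_i/M_i$ is constant; integrating against $\dv$ and using $\vint{f_i}=n_i=\vint{M_i}$ shows the constant is $1$, i.e.\ $f_i=M_i$. Conversely, if $f_i=M_i$ then $\nabla_{\bfv}(f_i/M_i)=\nabla_{\bfv}1=0$ and the left side of \eqref{eq:intra-species_Hthm} is zero. I do not anticipate a genuine obstacle: the only delicate points are the justification of the integration by parts (covered by the standing assumption on $f_i$) and the implicit requirement $\lambda_{i,i}>0$ for the "only if" direction (if $\lambda_{i,i}=0$ then $\cC_{i,i}\equiv 0$ and equality is automatic). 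The one conceptual step worth emphasizing is the use of \Cref{lemma:momentProperties} to annihilate the $\log M_i$ contribution, which is what leaves behind the single non-positive Fisher-information term.
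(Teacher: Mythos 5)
Your proof is correct and takes essentially the same route as the paper's: you annihilate the $\log M_i$ contribution using \Cref{lemma:momentProperties} with $\bfu_{i,i}=\bfu_i$ and $\theta_{i,i}=\theta_i$, integrate by parts to arrive at the same non-positive quantity $-\lambda_{i,i}\theta_i\vint{\tfrac{M_i^2}{f_i}\left|\nabla_{\bfv}(f_i/M_i)\right|^2}$, and settle the equality case via $\vint{f_i}=\vint{M_i}$. Your explicit caveat that the ``only if'' direction requires $\lambda_{i,i}>0$ is a valid point the paper leaves implicit.
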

    
    \begin{proof}
Since
    \begin{equation}
\log(M_{i})
  = 
    \log
      \left(
        \frac{n_i}{(2\pi\theta_i)^{d/2}}
      \right)
      -
      \frac{|\bfv-\bfu_i|^2}{2\theta_i}
,
    \end{equation}
it follows from \Cref{lemma:momentProperties} that 
    \begin{equation}
\Vint{\log(M_{i}) \cC(f_i,f_i)} = 0
.
    \end{equation}
Hence, after integrating by parts,
    \begin{equation}
    \begin{split}
\Vint{\log(f_i) \cC_{i,i}(f_i,f_i)}
  &= 
    \Vint{\log \left( \frac{f_i}{M_{i}} \right) \cC_{i,i}(f_i,f_i)}
    \\
  =
  \lambda_{i,i} \theta_i 
    &\Vint
      {
        \log
        \left(
          \frac{f_i}{M_i}
        \right)
        \nabla_{\bfv}\cdot
        \left(
          M_i\nabla_{\bfv}
          \left(
            \frac{f}{M_i}
          \right)
        \right)
      }
  =
    -\lambda_{i,i}\theta_i
    \Vint
      {
        \frac{M_i^2}{f}
        \left|
          \nabla_{\bfv}
          \left(
            \frac{f_i}{M_i}
          \right)
        \right|^2
      }
.
    \end{split}
    \end{equation}
Thus the inequality in \eqref{eq:intra-species_Hthm} holds, with equality if and only if $f_i/M_i$ is constant with respect to $\bfv$.
Since $\vint{f_i} = \vint{M_i}$, this constant must be one, and the result follows.
    \end{proof}
    
We now prove \eqref{eq:pairwiseEntropy}.
Several intermediate steps are required.

    \begin{lemma}
    \label{lemma:genericMaxwellianIdentity}
For a general Maxwellian $M_{n,\bfu,\theta}$,
    \begin{equation}
\vint{
  \log(f) \cL(f,M_{n,\bfu,\theta})
}
  =
    -\vint
      {
        f \Delta_{\bfv}
        \left(
          \log(f) - \log(M_{n,\bfu,\theta})
        \right)
      }
,
    \end{equation}
which implies that
    \begin{equation}
    \label{eq:identity1}
\vint{\log(f_i) \cL(f_i,M_{i,j})}
  - \vint{\log(f_i) \cL(f_i,M_{i,i})}
=
\vint{f_i \Delta_{\bfv} \left( \log(M_{i,i}) - \log(M_{i,j}) \right)}
.
    \end{equation}
    \end{lemma}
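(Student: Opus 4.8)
The plan is to prove the first displayed identity by a direct integration-by-parts computation and then to read off the second as an immediate consequence. The substantive step is the first identity; once it is in hand, the second follows in a single line.

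For the first identity, the key observation is that $\cL(f,M)$ can be put into ``logarithmic'' divergence form. Since $f>0$ by the standing regularity assumption, the rewriting already used in the definition of $\cL$ gives
\[
  M\,\nabla_{\bfv}\!\left(\frac{f}{M}\right)
  = \nabla_{\bfv}f - f\,\nabla_{\bfv}\log M
  = f\,\nabla_{\bfv}\!\left(\log f - \log M\right),
\]
so that $\cL(f,M) = \nabla_{\bfv}\cdot\!\bigl(f\,\nabla_{\bfv}(\log f - \log M)\bigr)$. Multiplying by $\log f$, integrating over $\bbR^d$, and integrating by parts once --- the boundary term vanishes by the assumed exponential decay of $f$ --- gives
\[
  \vint{\log f\,\cL(f,M)}
  = -\intd \nabla_{\bfv}(\log f)\cdot f\,\nabla_{\bfv}\!\left(\log f - \log M\right)\dv .
\]
Since $f\,\nabla_{\bfv}(\log f) = \nabla_{\bfv}f$, the integrand becomes $\nabla_{\bfv}f\cdot\nabla_{\bfv}(\log f - \log M)$, and a second integration by parts moving the gradient off $f$ produces a multiple of $\vint{f\,\Delta_{\bfv}(\log f - \log M)}$, which is the first identity. (Alternatively, one can skip the rewriting of $\cL$, integrate by parts directly from $\cL(f,M)=\nabla_{\bfv}\cdot[\nabla_{\bfv}f - f\nabla_{\bfv}\log M]$, and handle the resulting $\intd |\nabla_{\bfv}f|^2/f\dv$ term via $\Delta_{\bfv}\log f = \Delta_{\bfv}f/f - |\nabla_{\bfv}f|^2/f^2$ together with $\intd\Delta_{\bfv}f\dv=0$; the route above is shorter.)

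For the second identity I would apply the first identity twice --- once with $M = M_{i,j}$ and once with $M = M_{i,i} = M_i$ --- and subtract. The two $\vint{f_i\,\Delta_{\bfv}\log f_i}$ contributions do not depend on the second Maxwellian and cancel, leaving exactly $\vint{f_i\,\Delta_{\bfv}(\log M_{i,i} - \log M_{i,j})}$. This argument uses neither the conservation relations of \Cref{prop:parameters_ji} nor the parameter constraints of \Cref{ass1}: it is a pure calculus identity valid for any admissible $f_i$ and any Maxwellians. The only real care needed --- the ``main obstacle'', such as it is --- is bookkeeping: tracking the signs across the two successive integrations by parts and confirming that every boundary term genuinely vanishes under the decay hypothesis (in particular that $\log f_i$, $\nabla_{\bfv}\log f_i = \nabla_{\bfv}f_i/f_i$, and their products with the Maxwellian factors decay at infinity), since a slip here would propagate directly into the $\cH$-theorem computation that invokes \eqref{eq:identity1}.
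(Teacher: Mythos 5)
Your route is the same as the paper's: multiply by $\log f$, integrate by parts twice, and use $f\nabla_{\bfv}\log f=\nabla_{\bfv}f$; rewriting $\cL(f,M)=\nabla_{\bfv}\cdot\bigl(f\nabla_{\bfv}(\log f-\log M)\bigr)$ first is only a cosmetic repackaging of the paper's starting point $\nabla_{\bfv}\cdot[\nabla_{\bfv}f-f\nabla_{\bfv}\log M]$. The one substantive problem is that at exactly the step you yourself flag as the main hazard --- the sign --- you go vague: ``produces a multiple of $\vint{f\,\Delta_{\bfv}(\log f-\log M)}$'' does not prove a signed identity. Carrying your computation through,
\begin{equation*}
\vint{\log f\,\cL(f,M)}
  \;=\; -\intd \nabla_{\bfv}f\cdot\nabla_{\bfv}\bigl(\log f-\log M\bigr)\dv
  \;=\; +\vint{f\,\Delta_{\bfv}(\log f-\log M)}
  \;=\; -\vint{f\,\Delta_{\bfv}(\log M-\log f)},
\end{equation*}
which is the \emph{opposite} sign to the first display of the lemma as printed.

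This needs to be made explicit, because the two displays in the statement differ by exactly this sign: if you took the printed first identity at face value, your subtraction of the $M_{i,i}$ and $M_{i,j}$ instances would produce $\vint{f_i\Delta_{\bfv}(\log M_{i,j}-\log M_{i,i})}$, the negative of \eqref{eq:identity1}, so your proposal as written claims two things that cannot both follow from one computation. The plus-sign version you actually derive is the correct one: one can check it directly by taking $f$ and $M$ to be centered Maxwellians with different temperatures and using \eqref{eq:DeltaM_identity}, and it is the version from which \eqref{eq:identity1} follows by your cancellation argument --- which is what matters downstream, since \eqref{eq:identity1} with \eqref{eq:DeltaM_identity} yields the term $dn_i\lambda_{i,j}(T_i-T_{i,j})/T_i$ in \Cref{lemma:mixed_identity} and hence the dissipative sign in the $\cH$-theorem. (The paper's own proof makes the same slip in its final line, writing $\vint{\nabla_{\bfv}f\cdot\nabla_{\bfv}(\log M-\log f)}=-\vint{f\Delta_{\bfv}(\log f-\log M)}$ where it should be $-\vint{f\Delta_{\bfv}(\log M-\log f)}$.) So: record the sign you obtain, note that it corrects the first display, and your derivation of \eqref{eq:identity1} is then complete and essentially identical to the paper's; do not leave the sign as ``a multiple of.''
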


    \begin{proof}
Using integration by parts,
    \begin{equation}
    \begin{split}
\vint{
  \log(f) \cL(f,M)
}
  &= \vint{\log(f) \nabla_{\bfv}  \cdot
    \left[
      \nabla_{\bfv} f - f \nabla_{\bfv} (\log(M)) 
    \right]}
    \\
  &= -\vint{ \nabla_{\bfv} \log(f) \cdot  \nabla_{\bfv} f } 
    + \vint{f \nabla_{\bfv} \log(f) \cdot \nabla_{\bfv} \log(M) }
    \\
  &= 
    -\vint
      {\nabla_{\bfv} \log(f) \cdot  \nabla_{\bfv} f} 
    +
    \vint
      {\nabla_{\bfv} f \cdot \nabla_{\bfv} \log(M)}
    \\
  &= \vint{\nabla_{\bfv} f \cdot \nabla_{\bfv}
    \left(
      \log(M) - \log(f)
    \right)}
    \\
  &= -\vint{f \Delta_{\bfv} \left( \log(f) - \log(M) \right)}
.
    \end{split}
    \end{equation}
Identity \eqref{eq:identity1} follows immediately.
    \end{proof}

The key point of \Cref{lemma:genericMaxwellianIdentity} is that the entropy dissipation due to $\cL(f_i,M_{i,j})$ can be expressed in terms of moments of $f_i$ and $f_j$.
This fact is used to write the inter-species entropy dissipation in terms of the intra-species entropy dissipation and some moment-dependent corrections.

    \begin{lemma}
    \label{lemma:mixed_identity}
    \begin{align}
    \label{eq:mixed_identity}
\vint{\log(f_i)\cC_{i,j}(f_i,f_j)}
  &=
    \frac{\lambda_{i,j}\theta_{i,j}}{\lambda_{i,i}\theta_{i,i}}\vint{ \log(f_i)\cC_{i,i} (f_i,f_i)} 
    +
    dn_i\lambda_{i,j}
    \frac{T_{i}-T_{i,j}}{T_{i}}
.
    \end{align}
    \end{lemma}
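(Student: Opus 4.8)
The plan is to reduce \eqref{eq:mixed_identity} to the identity \eqref{eq:identity1} from \Cref{lemma:genericMaxwellianIdentity} and then evaluate the one remaining moment of $f_i$ explicitly. First I would unfold the definitions in \eqref{eq:MLB_operator}, writing $\cC_{i,j}(f_i,f_j) = \lambda_{i,j}\theta_{i,j}\,\cL(f_i,M_{i,j})$ and $\cC_{i,i}(f_i,f_i) = \lambda_{i,i}\theta_{i,i}\,\cL(f_i,M_{i,i})$. Dividing the first term on the right of \eqref{eq:mixed_identity} by $\lambda_{i,i}\theta_{i,i}$ and multiplying the whole claimed identity through by the appropriate factors, one sees that \eqref{eq:mixed_identity} is equivalent to
\[
\lambda_{i,j}\theta_{i,j}\vint{\log(f_i)\cL(f_i,M_{i,j})}
= \lambda_{i,j}\theta_{i,j}\vint{\log(f_i)\cL(f_i,M_{i,i})} + d\,n_i\,\lambda_{i,j}\,\frac{T_i - T_{i,j}}{T_i}.
\]

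Next I would invoke \eqref{eq:identity1}, which gives
\[
\vint{\log(f_i)\cL(f_i,M_{i,j})} - \vint{\log(f_i)\cL(f_i,M_{i,i})}
= \vint{f_i\,\Delta_{\bfv}\bigl(\log(M_{i,i}) - \log(M_{i,j})\bigr)}.
\]
The key simplification is that, by the Maxwellian identity \eqref{eq:DeltaM_identity}, $\Delta_{\bfv}\log(M_{i,i}) = -d/\theta_i$ and $\Delta_{\bfv}\log(M_{i,j}) = -d/\theta_{i,j}$ are both constant in $\bfv$. Hence the integrand is a constant multiple of $f_i$, and the right-hand side collapses to $d\bigl(\theta_{i,j}^{-1} - \theta_i^{-1}\bigr)\vint{f_i} = d\,n_i\bigl(\theta_{i,j}^{-1} - \theta_i^{-1}\bigr)$, using $\vint{f_i}=n_i$.

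Finally I would multiply this difference by $\lambda_{i,j}\theta_{i,j}$, producing the correction term $d\,n_i\,\lambda_{i,j}\,\theta_{i,j}\bigl(\theta_{i,j}^{-1} - \theta_i^{-1}\bigr) = d\,n_i\,\lambda_{i,j}\bigl(1 - \theta_{i,j}/\theta_i\bigr) = d\,n_i\,\lambda_{i,j}(\theta_i - \theta_{i,j})/\theta_i$, and then use $\theta_i = T_i/m_i$ and $\theta_{i,j} = T_{i,j}/m_i$ from \eqref{eq:mixtureMaxwellian_and_theta_ij_def} to rewrite $(\theta_i - \theta_{i,j})/\theta_i = (T_i - T_{i,j})/T_i$, which is precisely the stated correction; re-expressing $\lambda_{i,j}\theta_{i,j}\vint{\log(f_i)\cL(f_i,M_{i,i})}$ as $\tfrac{\lambda_{i,j}\theta_{i,j}}{\lambda_{i,i}\theta_{i,i}}\vint{\log(f_i)\cC_{i,i}(f_i,f_i)}$ finishes the argument. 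There is no genuine obstacle here — the proof is a short chain of substitutions built entirely on results already established — and the only point requiring mild care is consistently tracking the mass factors $m_i$ so that the $\theta$'s convert cleanly into temperatures $T$'s.
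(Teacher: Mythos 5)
Your proposal is correct and follows essentially the same route as the paper's proof: add and subtract the intra-species term, apply \eqref{eq:identity1} from \Cref{lemma:genericMaxwellianIdentity}, use the constant Laplacian identity \eqref{eq:DeltaM_identity} together with $\vint{f_i}=n_i$, and convert $\theta$'s to $T$'s via \eqref{eq:mixtureMaxwellian_and_theta_ij_def} since both share the mass $m_i$. No gaps; the argument and the final correction term $dn_i\lambda_{i,j}(T_i-T_{i,j})/T_i$ match the paper exactly.
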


    \begin{proof}
Adding and subtracting $\lambda_{i,j}\theta_{i,j} \vint{ \log(f_i) \cL(f_i,M_{i,j}) }$ on the left-hand side of \eqref{eq:mixed_identity} and then applying \Cref{lemma:genericMaxwellianIdentity} gives
    \begin{align}
\vint{ \log(f_i)\cC_{i,j} (f_i,f_j)} 
&= \lambda_{i,j}  \theta_{i,j} \vint{ \log(f_i) \cL(f_i,M_{i,j}) }
    \nonumber
    \\
 &=\lambda_{i,j}
   \theta_{i,j}\!\Vint{ \log(f_i) \cL(f_i,M_{i,i}) }
    \!+\!
    \lambda_{i,j}\theta_{i,j} \!\Vint{ \log(f_i) \cL(f_i,M_{i,j}) }
    \!-\!
    \lambda_{i,j}  \theta_{i,j} \!\Vint{ \log(f_i) \cL(f_i,M_{i,i}) }
    \nonumber
    \\
  &=\frac{\lambda_{i,j}\theta_{i,j}}{\lambda_{i,i}\theta_{i,i}}\vint{ \log(f_i)\cC_{i,i}(f_i,f_i) } 
    + \lambda_{i,j}  \theta_{i,j}
    \Vint{f_i\Delta_{\bfv} \left[\log(M_{i,i}) - \log(M_{i,j})\right]}
    \nonumber
    \\
  &=\frac{\lambda_{i,j}\theta_{i,j}}{\lambda_{i,i}\theta_{i,i}}\vint{ \log(f_i)\cC_{i,i}(f_i,f_i) } 
    + dn_i\lambda_{i,j}
    \frac{T_{i}-T_{i,j}}{T_{i}}
,
    \end{align}
where in the final line above we have used \eqref{eq:DeltaM_identity} and \eqref{eq:mixtureMaxwellian_and_theta_ij_def}.
    \end{proof}

    \begin{lemma}
The cross terms in \eqref{eq:pairwiseEntropy} satisfy
    \begin{align}
\vint{\log(f_i) \cC_{i,j}(f_i,f_j)} + \vint{\log(f_j) \cC_{j,i}(f_j,f_i)}
  \leq 0
,
    \end{align}
with equality if and only if $f_i$ and $f_j$ are Maxwellians with a common bulk velocity and temperature.
    \end{lemma}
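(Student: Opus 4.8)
The plan is to reduce the pairwise cross-term dissipation to the intra-species estimate of \Cref{lemma:singleSpeciesEntropyIneq} plus a purely algebraic correction, and then to exploit the conservation symmetries \eqref{eq:modelSymmetries} together with the sign constraints of \Cref{ass1}. First I would apply \Cref{lemma:mixed_identity} to each summand, once for the ordered pair $(i,j)$ and once for $(j,i)$, and add the two identities:
\begin{align*}
\vint{\log(f_i)\cC_{i,j}(f_i,f_j)} + \vint{\log(f_j)\cC_{j,i}(f_j,f_i)}
  &= \frac{\lambda_{i,j}\theta_{i,j}}{\lambda_{i,i}\theta_{i,i}}\vint{\log(f_i)\cC_{i,i}(f_i,f_i)} \\
  &\quad + \frac{\lambda_{j,i}\theta_{j,i}}{\lambda_{j,j}\theta_{j,j}}\vint{\log(f_j)\cC_{j,j}(f_j,f_j)} + S,
\end{align*}
where $S := dn_i\lambda_{i,j}\frac{T_i - T_{i,j}}{T_i} + dn_j\lambda_{j,i}\frac{T_j - T_{j,i}}{T_j}$. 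The two intra-species terms are nonpositive by \Cref{lemma:singleSpeciesEntropyIneq}, and their prefactors are nonnegative (under \Cref{ass1} the mixture temperatures $T_{i,j}$ and $T_{j,i}$, being a convex combination of $T_i, T_j$ plus a nonnegative term, are positive), so the left-hand side is $\leq 0$ as soon as $S \leq 0$.

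For the second step I would substitute the mixture-temperature definition \eqref{eq:T_ij_def}, which gives $T_i - T_{i,j} = (1-\beta_{i,j})(T_i - T_j) - \tfrac1d\gamma_{i,j}|\bfu_i - \bfu_j|^2$ and an analogous identity for $T_j - T_{j,i}$. Collecting the temperature-linear part with the conservation symmetry $n_i\lambda_{i,j}(1-\beta_{i,j}) = n_j\lambda_{j,i}(1-\beta_{j,i})$ of \eqref{eq:betaSymmetry}, and using that $|\bfu_i - \bfu_j|^2$ is symmetric in $i$ and $j$, I expect $S$ to collapse to the manifestly nonpositive form
\[
S = -\,d\,n_i\lambda_{i,j}(1-\beta_{i,j})\,\frac{(T_i - T_j)^2}{T_iT_j}
  \;-\; |\bfu_i - \bfu_j|^2\left(\frac{n_i\lambda_{i,j}\gamma_{i,j}}{T_i} + \frac{n_j\lambda_{j,i}\gamma_{j,i}}{T_j}\right).
\]
Under \Cref{ass1} every factor here has a definite sign ($\lambda_{i,j},\lambda_{j,i} \geq 0$, $0 \leq \beta_{i,j} \leq 1$, $\gamma_{i,j},\gamma_{j,i} \geq 0$, and $n_i, n_j, T_i, T_j > 0$), hence $S \leq 0$ and the inequality follows.

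For the equality statement, equality forces each of the three nonpositive contributions, the two intra-species dissipations and $S$, to vanish separately. Vanishing of the intra-species dissipations gives $f_i = M_i$ and $f_j = M_j$ by the equality clause of \Cref{lemma:singleSpeciesEntropyIneq}, so both distributions are Maxwellians. Vanishing of $S$ forces $n_i\lambda_{i,j}(1-\beta_{i,j})(T_i - T_j)^2 = 0$ and $\bigl(n_i\lambda_{i,j}\gamma_{i,j}/T_i + n_j\lambda_{j,i}\gamma_{j,i}/T_j\bigr)|\bfu_i - \bfu_j|^2 = 0$; combining \eqref{eq:deltaSymmetry} with \eqref{eq:gammaSymmetry} gives $n_i\lambda_{i,j}\gamma_{i,j} + n_j\lambda_{j,i}\gamma_{j,i} = \rho_i\lambda_{i,j}(1-\alpha_{i,j})$, so, provided $\lambda_{i,j} > 0$ and the mixture parameters are not at the degenerate values $\alpha_{i,j} = 1$ or $\beta_{i,j} = 1$ (under which the corresponding relaxation is switched off), these two conditions force $\bfu_i = \bfu_j$ and $T_i = T_j$. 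Hence $f_i$ and $f_j$ are Maxwellians with a common bulk velocity and temperature; the converse is immediate, since common moments make each $f$ equal its own Maxwellian, $|\bfu_i - \bfu_j|^2 = 0$, and $T_i = T_j$, so all three contributions vanish.

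The main obstacle I anticipate is the bookkeeping in the second step, namely checking that $S$ genuinely telescopes into the displayed sum of two sign-definite terms with no residual cross terms, together with identifying precisely which non-degeneracy hypotheses on $\lambda_{i,j}$, $\alpha_{i,j}$, and $\beta_{i,j}$ are needed so that $S = 0$ forces equal bulk velocities and temperatures rather than a degenerate cancellation.
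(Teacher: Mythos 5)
Your proposal follows essentially the same route as the paper: apply \Cref{lemma:mixed_identity} to both ordered pairs, sum, and use \eqref{eq:T_ij_def} together with the symmetry \eqref{eq:betaSymmetry} to rewrite the moment correction as the manifestly nonpositive combination $-dn_i\lambda_{i,j}(1-\beta_{i,j})\frac{(T_i-T_j)^2}{T_iT_j}-\bigl(n_i\lambda_{i,j}\gamma_{i,j}T_j+n_j\lambda_{j,i}\gamma_{j,i}T_i\bigr)\frac{|\bfu_i-\bfu_j|^2}{T_iT_j}$, then invoke \Cref{lemma:singleSpeciesEntropyIneq}; your algebra reproduces \eqref{eq:lemmaProofEquation} exactly. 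Your additional remark about non-degeneracy ($\lambda_{i,j}>0$, $\beta_{i,j}<1$, $\gamma_{i,j}>0$) being needed for the equality case is a fair point that the paper leaves implicit, but it does not change the argument.
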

    
    \begin{proof}
Using
\eqref{eq:T_ij_def}, \eqref{eq:betaSymmetry}, and \Cref{lemma:mixed_identity},
    \begin{equation}
    \label{eq:lemmaProofEquation}
    \begin{split}
\vint{\log(f_i)\cC_{i,j}(f_i,f_j)}
  +
\vint{\log(f_j)\cC_{j,i}(f_j,f_i)}
   &=
    \frac
        {\lambda_{i,j}\theta_{i,j}}
        {\lambda_{i,i}\theta_{i,i}}
    \vint{\log(f_i)\cC_{i,i}(f_i,f_i)}
    +
    \frac
        {\lambda_{j,i}\theta_{j,i}}
        {\lambda_{j,j}\theta_{j,j}}
    \vint{\log(f_j)\cC_{j,j}(f_j,f_j)}
    \\
-
    dn_i\lambda_{i,j}(1-\beta_{i,j})
    &\frac{(T_i-T_j)^2}{T_iT_j}
  -
    \left(
      n_i\lambda_{i,j}\gamma_{i,j}T_j+n_j\lambda_{j,i}\gamma_{j,i}T_i
    \right)
    \frac{|\bfu_i-\bfu_j|^2}{T_iT_j}
.
    \end{split}
    \end{equation} 
By \Cref{lemma:singleSpeciesEntropyIneq}, the right hand side of \eqref{eq:lemmaProofEquation} is non-positive.
Moreover, it is equal to zero if and only if $f_i$ and $f_j$ are Maxwellians with $\bfu_i = \bfu_j$ and $T_i=T_j$.  
    \end{proof}
    
    \begin{remark}
While the condition on $\alpha_{i,j} \in [0,1]$ is not explicitly used in the proofs above, the condition that $\gamma_{i,j}\geq0$, which is used, may fail to hold if the bound on $\alpha_{i,j}$ does not hold.
In particular, \eqref{eq:deltaSymmetry} and   \eqref{eq:gammaSymmetry} implies that
    \begin{equation}
\rho_i\lambda_{i,j}(1-\alpha_{i,j})
  =
    n_i\lambda_{i,j}\gamma_{i,j}
    +
    n_j\lambda_{j,i}\gamma_{j,i}
.
    \end{equation}
Thus if $\alpha_{i,j}>1$, either $\gamma_{i,j}$ or $\gamma_{j,i}$ will fail to be positive.
    \end{remark}

    \section{Matching relaxation rates}
    \label{section:matchingRelaxationRates}
    
In this section, momentum and energy relaxation rates for the current model are matched to the relaxation rates for the Boltzmann collision operator with Coulomb potential (Boltz-C).
To derive relaxation rates between species, it is standard to consider a two species mixture.
In this setting \cite{francisquez2022improved,morse1963},
    \begin{subequations}
    \label{eq:BoltzC_relaxationRates}
    \begin{align}
\left.
  \frac{\partial}{\partial t}
  \left(
    \rho_i\bfu_i-\rho_j\bfu_j
  \right)
\right|_{\textnormal{(Boltz-C)}}
  &= 
    \xi_{i,j}(m_i+m_j)(\bfu_j-\bfu_i)
,
    \\
\left.
  \frac{\partial}{\partial t}
  \frac{d}{2}
  \left(
    n_iT_i-n_jT_j
  \right)
\right|_{\textnormal{(Boltz-C)}}
  &= 
    \xi_{i,j}
  \left[
    d(T_j-T_i)
    +
    \frac{m_j-m_i}{2}|\bfu_i-\bfu_j|^2
  \right]
,
    \end{align}
    \end{subequations}
where $\xi_{i,j}$ depends on the vacuum permittivity $\epsilon_0$, the Coulomb logarithm $\log( \Lambda_{i,j})$ and the species charges $q_i$ and $q_j$:
\footnote{The term $\xi_{i,j}\coloneqq\alpha_E$, where $\alpha_E$ is taken from Equation (2.5) of \cite{francisquez2022improved}.}
    \begin{equation}
    \label{eq:xiDefinition}
\xi_{i,j}
  = 
    \frac
        {2}
        {3(2\pi)^{3/2}\epsilon_0^2}
    \frac
        {|\log \Lambda_{i,j}|(q_iq_j)^2n_in_j}
        {m_im_j
          \left(
            \frac{T_i}{m_i}+\frac{T_j}{m_j}
          \right)^{3/2}
        }
.
    \end{equation}
We will match these rates with the relaxation rates of the M-LB model:
    \begin{subequations}
    \label{eq:MLB_relaxationRates}
    \begin{align}
    \label{eq:MLB_momentum}
\left.
  \frac{\partial}{\partial t}
  \left(
    \rho_i\bfu_i-\rho_j\bfu_j
  \right)
\right|_{\textnormal{M-LB}}
  &= 
    2\rho_i\lambda_{i,j}(1-\alpha_{i,j})(\bfu_j-\bfu_i)
,
    \\
    \label{eq:MLB_temperature}
\left.
  \frac{\partial}{\partial t}
  \frac{d}{2}
  \left(
    n_iT_i-n_jT_j
  \right)
\right|_{\textnormal{M-LB}}
  &= 
    2dn_i\lambda_{i,j}(1-\beta_{i,j})(T_j-T_i)
    +
    (2n_i\lambda_{i,j}\gamma_{i,j}-\delta_{i,j})|\bfu_i-\bfu_j|^2
.
    \end{align}    
    \end{subequations}
The derivation of these relaxation rates is given in \Cref{appendix:modelRelaxationRates}; specifically \eqref{eq:MLB_momentum} is verified in \Cref{lemma:momentumRelaxationRates} and \eqref{eq:MLB_temperature} is verified in \Cref{lemma:temperatureRelaxationRates}.  

Equating the right-hand sides of \eqref{eq:MLB_relaxationRates} and \eqref{eq:BoltzC_relaxationRates} gives the following matching conditions for the momentum:
    \begin{equation}
    \label{eq:momentum_matching}
2\rho_i\lambda_{i,j}(1-\alpha_{i,j}) = \xi_{i,j}(m_i + m_j)
,
    \end{equation}
and for the temperature:
    \begin{equation}
    \label{eq:temperature_matching}
2dn_i\lambda_{i,j}(1-\beta_{i,j}) 
  = d\xi_{i,j}
\qquand
2n_i\lambda_{i,j}\gamma_{i,j} - \delta_{i,j}
  = \xi_{i,j}\frac{m_j-m_i}{2}
.
    \end{equation}

    \begin{theorem}
    \label{theorem:matching}
For $1\leq i,j \leq N$, the relaxation rates in \eqref{eq:MLB_relaxationRates} match the relaxation rates in \eqref{eq:BoltzC_relaxationRates}, with the following expressions for $\alpha_{i,j}$, $\beta_{i,j}$, and $\gamma_{i,j}$:
    \begin{subequations}
    \label{eq:alphaBetaGamma_defs}
    \begin{align}
    \label{eq:alphaDef}
\alpha_{i,j}
  &= 
    1-\frac{1}{2}\frac{m_i+m_j}{m_i}\frac{\xi_{i,j}}{n_i\lambda_{i,j}},
    \\
    \label{eq:betaDef}
\beta_{i,j}
  &= 
    1-\frac{1}{2}\frac{\xi_{i,j}}{n_i\lambda_{i,j}}
  = 
    1-\frac{m_i}{m_i+m_j}(1-\alpha_{i,j}),
    \\
    \label{eq:gammaDef}
\gamma_{i,j}
  &= 
    \frac{1}{2}\frac{1}{n_i\lambda_{i,j}}
      \left[
        \xi_{i,j}\frac{m_j-m_i}{2}
        +\delta_{i,j}
      \right]
  =
    \frac{m_im_j}{m_i+m_j}(1-\alpha_{i,j})
.
    \end{align}
    \end{subequations}
    \end{theorem}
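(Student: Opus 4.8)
The plan is to read the statement as the solution of the three scalar matching conditions \eqref{eq:momentum_matching}--\eqref{eq:temperature_matching} for the three unknowns $\alpha_{i,j}$, $\beta_{i,j}$, $\gamma_{i,j}$, and then to recast the solutions into the compact forms on the right-hand sides of \eqref{eq:alphaBetaGamma_defs}. Since \eqref{eq:momentum_matching}--\eqref{eq:temperature_matching} were themselves obtained by equating \eqref{eq:MLB_relaxationRates} with \eqref{eq:BoltzC_relaxationRates}, exhibiting parameters that satisfy them is exactly what "the rates match" means; the only genuinely substantive point is to check that the resulting triple is internally consistent, i.e.\ compatible with the conservation constraints of \Cref{prop:parameters_ji}, which were assumed when \eqref{eq:MLB_relaxationRates} was derived.

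First I would solve the momentum condition: dividing \eqref{eq:momentum_matching} by $2\rho_i\lambda_{i,j} = 2m_i n_i\lambda_{i,j}$ and isolating $\alpha_{i,j}$ gives \eqref{eq:alphaDef}, and as byproducts I would record the identities $\delta_{i,j} = \rho_i\lambda_{i,j}(1-\alpha_{i,j}) = \tfrac12\xi_{i,j}(m_i+m_j)$ and $\xi_{i,j}/(n_i\lambda_{i,j}) = \frac{2m_i}{m_i+m_j}(1-\alpha_{i,j})$ for later use. Next, the first equation in \eqref{eq:temperature_matching}, divided by $2dn_i\lambda_{i,j}$, yields $\beta_{i,j} = 1 - \tfrac12\,\xi_{i,j}/(n_i\lambda_{i,j})$, the first form in \eqref{eq:betaDef}; substituting the recorded expression for $\xi_{i,j}/(n_i\lambda_{i,j})$ converts it to the second form. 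Finally, the second equation in \eqref{eq:temperature_matching} together with $\delta_{i,j} = \tfrac12\xi_{i,j}(m_i+m_j)$ gives $2n_i\lambda_{i,j}\gamma_{i,j} = \xi_{i,j}\tfrac{m_j-m_i}{2} + \delta_{i,j} = \xi_{i,j}m_j$, the first form in \eqref{eq:gammaDef}, and one more substitution of $\xi_{i,j}/(n_i\lambda_{i,j})$ produces $\gamma_{i,j} = \frac{m_im_j}{m_i+m_j}(1-\alpha_{i,j})$.

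The step I would be most careful about is verifying that these choices, imposed on every ordered pair, do not conflict with \Cref{prop:parameters_ji}, equivalently with the symmetry relations \eqref{eq:modelSymmetries}; this matters because \eqref{eq:MLB_relaxationRates} presupposes pairwise momentum and energy conservation (\Cref{ass:conservation}). The resolution is simply that $\xi_{i,j}$ in \eqref{eq:xiDefinition} is manifestly symmetric in $i$ and $j$ (each factor $|\log\Lambda_{i,j}|$, $(q_iq_j)^2$, $n_in_j$, $m_im_j$, and $\big(T_i/m_i + T_j/m_j\big)^{3/2}$ is symmetric), so $\delta_{i,j} = \tfrac12\xi_{i,j}(m_i+m_j) = \delta_{j,i}$ recovers \eqref{eq:deltaSymmetry}, $n_i\lambda_{i,j}(1-\beta_{i,j}) = \tfrac12\xi_{i,j}$ is symmetric and recovers \eqref{eq:betaSymmetry}, and $2n_i\lambda_{i,j}\gamma_{i,j}-\delta_{i,j} = \xi_{i,j}\tfrac{m_j-m_i}{2}$ is antisymmetric and recovers \eqref{eq:gammaSymmetry}. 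Hence the parameters defined by \eqref{eq:alphaBetaGamma_defs} satisfy \Cref{ass:conservation}, the rates \eqref{eq:MLB_relaxationRates} hold with these parameters, and by construction they coincide with \eqref{eq:BoltzC_relaxationRates}. There is no real analytical obstacle; the content is the bookkeeping of the three substitutions plus the observation that symmetry of $\xi_{i,j}$ is precisely what makes the per-pair matching globally consistent.
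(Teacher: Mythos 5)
Your proposal is correct and follows essentially the same route as the paper: the paper's proof simply solves the matching conditions \eqref{eq:momentum_matching}--\eqref{eq:temperature_matching} for $\alpha_{i,j}$, $\beta_{i,j}$, and $\gamma_{i,j}$, exactly the algebra you carry out. Your additional check that the resulting parameters are compatible with the symmetries \eqref{eq:modelSymmetries} via the symmetry of $\xi_{i,j}$ is not part of the paper's proof but reproduces the content of the remark immediately following \Cref{theorem:matching}, so it is a welcome (and correct) piece of extra diligence rather than a different approach.
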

    
    \begin{proof}
The equations in \eqref{eq:alphaBetaGamma_defs} follow immediately by solving \eqref{eq:momentum_matching} and \eqref{eq:temperature_matching} for $\alpha_{i,j}$, $\beta_{i,j}$, and $\gamma_{i,j}$.
    \end{proof}
    
    \begin{remark}
The matching conditions between the coefficients in \eqref{eq:MLB_relaxationRates} and the coefficients in \eqref{eq:BoltzC_relaxationRates} are consistent with the relationship between $\alpha_{i,j}$ and $\alpha_{j,i}$ given in \Cref{prop:parameters_ji}.
Thus it is sufficient to perform the matching only for $i < j$.
More specifically, if for all $1 \leq i < j \leq N$, it holds that
    \begin{equation}
2\rho_i\lambda_{i,j}(1-\alpha_{i,j}) = \xi_{i,j}(m_i + m_j)
,
    \end{equation}
then \eqref{eq:deltaSymmetry} and the symmetry of $\xi_{i,j}$ imply that
    \begin{equation}
2\rho_j\lambda_{j,i}(1-\alpha_{j,i}) 
    = 2\rho_i\lambda_{i,j}(1-\alpha_{i,j}) 
    = \xi_{i,j}(m_i + m_j)
    = \xi_{j,i}(m_j + m_i).
    \end{equation}
That is, the matching condition holds for the momentum relaxation with $i>j$.
Similarly, \eqref{eq:betaSymmetry} and the symmetry of $\xi_{i,j}$ imply that
    \begin{equation}
d\xi_{i,j}
    = 2dn_i\lambda_{i,j}(1-\beta_{i,j}) 
    = 2dn_j\lambda_{j,i}(1-\beta_{j,i})
    = d \xi_{j,i}
,
    \end{equation}
while \eqref{eq:gammaSymmetry}, and the symmetry of $\xi_{i,j}$ and $\delta_{i,j}$ imply that
    \begin{equation}
\xi_{i,j} \frac{m_j-m_i}{2} 
    = 2n_i\lambda_{i,j}\gamma_{i,j}-\delta_{i,j}
    = -(2n_j\lambda_{j,i}\gamma_{j,i} -\delta_{j,i})
    = - \xi_{j,i} \frac{m_i-m_j}{2}
.
    \end{equation}
    \end{remark}

Based on \eqref{eq:alphaBetaGamma_defs}, it is clear that for $1 \leq i,j \leq N$, the parameters $\beta_{i,j}$ and $\gamma_{i,j}$ are completely determined by $\alpha_{i,j}$.
Moreover, we have the following

    \begin{lemma}
    \label{lemma:beta_gamma}
Suppose that for $1 \leq i,j \leq N$, $0 \leq \alpha_{i,j} \leq 1$.
Then the remaining conditions in \Cref{ass1} hold.
That is
    \begin{equation}
0 \leq \beta_{i,j} \leq 1 
\qquand
\gamma_{i,j}\geq0
,\qquad 1 \leq i,j \leq N
.
    \end{equation}
    \end{lemma}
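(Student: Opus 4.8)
The plan is to read off the conclusion directly from the closed-form expressions in \Cref{theorem:matching}. By the second equalities in \eqref{eq:betaDef} and \eqref{eq:gammaDef},
\begin{equation}
\beta_{i,j}
  = 1 - \frac{m_i}{m_i+m_j}(1-\alpha_{i,j})
\qquand
\gamma_{i,j}
  = \frac{m_im_j}{m_i+m_j}(1-\alpha_{i,j})
,
\end{equation}
so everything reduces to tracking the sign and magnitude of the factor $1-\alpha_{i,j}$ together with the positive mass combinations $\tfrac{m_i}{m_i+m_j}$ and $\tfrac{m_im_j}{m_i+m_j}$. This is the only content of the lemma beyond \Cref{theorem:matching}.

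First I would record the elementary facts that, since $m_i,m_j>0$, one has $0 < \tfrac{m_i}{m_i+m_j} < 1$ and $\tfrac{m_im_j}{m_i+m_j} > 0$, and that the hypothesis $0 \leq \alpha_{i,j} \leq 1$ gives $0 \leq 1-\alpha_{i,j} \leq 1$. For $\gamma_{i,j}$ this is immediate: it is the product of a strictly positive number with a nonnegative number, hence $\gamma_{i,j} \geq 0$. For $\beta_{i,j}$, the product $\tfrac{m_i}{m_i+m_j}(1-\alpha_{i,j})$ lies in $\big[0,\tfrac{m_i}{m_i+m_j}\big] \subseteq [0,1)$, so $\beta_{i,j} = 1 - \tfrac{m_i}{m_i+m_j}(1-\alpha_{i,j})$ lies in $(0,1] \subseteq [0,1]$. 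Combining the two bounds yields $0 \leq \beta_{i,j} \leq 1$ and $\gamma_{i,j} \geq 0$, which are precisely the remaining conditions in \Cref{ass1}.

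There is essentially no obstacle here: the only ingredient used beyond the explicit formulas of \Cref{theorem:matching} is the positivity of the masses, which is part of the physical setup. I would, however, remark on the boundary behaviour as a sanity check: $\alpha_{i,j}=1$ forces $\beta_{i,j}=1$ and $\gamma_{i,j}=0$, recovering the symmetric M-BGK-type limit, while decreasing $\alpha_{i,j}$ toward $0$ moves $\beta_{i,j}$ and $\gamma_{i,j}$ monotonically to their extreme admissible values. An alternative route, which I would mention but not pursue, is to argue from the first form $\beta_{i,j} = 1 - \tfrac12 \tfrac{\xi_{i,j}}{n_i\lambda_{i,j}}$ in \eqref{eq:betaDef} and the expression \eqref{eq:alphaDef} for $\alpha_{i,j}$; this gives the same bounds but is less transparent than the mass-ratio form.
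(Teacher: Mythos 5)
Your argument is correct and is essentially the paper's proof: the paper also deduces the bounds directly from \eqref{eq:betaDef} and \eqref{eq:gammaDef} together with $0 \leq \alpha_{i,j} \leq 1$, merely stating this without spelling out the mass-ratio estimates you make explicit. Nothing is missing and no different route is taken.
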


    \begin{proof}
The proof follows immediately from \eqref{eq:betaDef} and \eqref{eq:gammaDef} and the assumed bounds on $\alpha_{i,j}$.
    \end{proof}

    \subsection{Choosing collision frequencies}
    
What remains to complete the model is to find frequencies $\lambda_{i,j}$ such that parameters $\alpha_{i,j}$ defined in \eqref{eq:alphaDef} satisfy the relation \eqref{eq:deltaSymmetry} and the condition $0 \leq \alpha_{i,j} \leq 1$.  
We propose collision frequencies of the form
    \begin{equation}
    \label{eq:collisionFrequencyDefinition}
\lambda_{i,j}
  =
    \widehat{\lambda}_{i,j}(T_i,T_j)
  = 
    \frac{\kappa_{i,j}\xi_{i,j}}{n_i}
,
    \end{equation}
where $\xi_{i,j}$ is given in \eqref{eq:xiDefinition} and $\kappa_{i,j} \geq 0$ is a tunable parameter.
Inserting \eqref{eq:collisionFrequencyDefinition} into \eqref{eq:alphaDef} gives
    \begin{equation}
    \label{eq:alpha_of_kappa}
\alpha_{i,j}
  =
    1-\frac{1}{2}\frac{m_i+m_j}{m_i}\frac{1}{\kappa_{i,j}},
    \end{equation}
which implies that $\alpha_{i,j} \leq 1$, and 
    \begin{equation}
    \label{eq:mu}
\alpha_{i,j} \geq 0
  \quiff
\kappa_{i,j} \geq \frac{m_i + m_j}{2 m_i}
  \eqqcolon \mu_{i,j}
.
    \end{equation}
Below we show how to enforce the desired conditions on $\alpha_{i,j}$ based on the choice of $\kappa_{i,j}$.

    \begin{theorem}
    \label{thm:c}
Suppose that the collision frequencies are given by  \eqref{eq:collisionFrequencyDefinition}, and $\kappa_{i,j} \geq \mu_{i,j}$ for each $1\leq i,j \leq N$.  
Then for $1\leq i,j \leq N$, $0 \leq \alpha_{i,j} \leq 1$, and for $1\leq i < j \leq N$, the pairs $(\alpha_{i,j}, \alpha_{j,i})$, $(\beta_{i,j}, \beta_{j,i})$, and $(\gamma_{i,j}, \gamma_{j,i})$ given in \eqref{eq:alphaBetaGamma_defs} satisfy \eqref{eq:modelSymmetries}. 
    \end{theorem}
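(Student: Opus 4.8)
The plan is to handle the two assertions separately. The bound $0 \le \alpha_{i,j} \le 1$ is the part that genuinely uses the explicit form \eqref{eq:collisionFrequencyDefinition} of the frequencies; the symmetry relations \eqref{eq:modelSymmetries} will follow from the matched formulas \eqref{eq:alphaBetaGamma_defs} together with the symmetry of $\xi_{i,j}$ alone (and have in effect already been observed in the remark following \Cref{theorem:matching}).

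For the first assertion, substituting $\lambda_{i,j} = \kappa_{i,j}\xi_{i,j}/n_i$ into \eqref{eq:alphaDef} gives $\alpha_{i,j} = 1 - \tfrac12\tfrac{m_i+m_j}{m_i}\tfrac1{\kappa_{i,j}}$, which is \eqref{eq:alpha_of_kappa}. Since $\kappa_{i,j} > 0$, the subtracted term is strictly positive, so $\alpha_{i,j} < 1$; and the hypothesis $\kappa_{i,j} \ge \mu_{i,j} = \tfrac{m_i+m_j}{2m_i}$ bounds that term above by $1$, so $\alpha_{i,j} \ge 0$. This is precisely the equivalence \eqref{eq:mu}.

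For the second assertion, I would verify the three identities in \eqref{eq:modelSymmetries} by direct computation, the key point being that the factor $n_i\lambda_{i,j}$ in the denominators of \eqref{eq:alphaBetaGamma_defs} cancels the prefactor sitting in front of each combination. Using \eqref{eq:alphaDef} and $\rho_i = m_in_i$ one gets $\delta_{i,j} = \rho_i\lambda_{i,j}(1-\alpha_{i,j}) = \tfrac12(m_i+m_j)\xi_{i,j}$; using \eqref{eq:betaDef}, $n_i\lambda_{i,j}(1-\beta_{i,j}) = \tfrac12\xi_{i,j}$; and using \eqref{eq:gammaDef}, $2n_i\lambda_{i,j}\gamma_{i,j} - \delta_{i,j} = \tfrac12(m_j-m_i)\xi_{i,j}$. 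The analogous quantities with $i$ and $j$ interchanged are obtained by relabeling alone; the first two expressions equal their swapped counterparts, and the third is the negative of its swapped counterpart, which is exactly \eqref{eq:deltaSymmetry}, \eqref{eq:betaSymmetry}, and \eqref{eq:gammaSymmetry}. The only input worth stating explicitly is the symmetry $\xi_{i,j} = \xi_{j,i}$, immediate from \eqref{eq:xiDefinition} since each factor there --- $|\log\Lambda_{i,j}|$, $(q_iq_j)^2$, $n_in_j$, $m_im_j$, and $T_i/m_i + T_j/m_j$ --- is unchanged under the swap.

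I do not expect a genuine obstacle: the statement is a consistency check that the particular choice \eqref{eq:collisionFrequencyDefinition} is compatible with constraints already imposed. The one point requiring care is that both members of each pair $(\alpha_{i,j},\alpha_{j,i})$, $(\beta_{i,j},\beta_{j,i})$, $(\gamma_{i,j},\gamma_{j,i})$ are defined by the \emph{same} formula \eqref{eq:alphaBetaGamma_defs} with indices exchanged, so that the swapped quantities in the previous paragraph are obtained honestly by relabeling rather than by invoking a previously derived relation --- hence there is nothing circular in verifying \eqref{eq:modelSymmetries} this way.
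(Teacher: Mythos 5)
Your proposal is correct and follows essentially the same route as the paper: obtain \eqref{eq:alpha_of_kappa} and read off $0\le\alpha_{i,j}\le1$ from $\kappa_{i,j}\ge\mu_{i,j}$, then verify each relation in \eqref{eq:modelSymmetries} by computing $\rho_i\lambda_{i,j}(1-\alpha_{i,j})=\tfrac12(m_i+m_j)\xi_{i,j}$, $n_i\lambda_{i,j}(1-\beta_{i,j})=\tfrac12\xi_{i,j}$, and $2n_i\lambda_{i,j}\gamma_{i,j}-\delta_{i,j}=\tfrac12(m_j-m_i)\xi_{i,j}$ and invoking $\xi_{i,j}=\xi_{j,i}$. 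The only (cosmetic) difference is that you cancel the prefactors directly from \eqref{eq:alphaBetaGamma_defs} rather than routing through $\mu_{i,j}/\kappa_{i,j}$ as the paper does, which makes explicit that the symmetries need only the matched formulas and the symmetry of $\xi_{i,j}$, not the specific form \eqref{eq:collisionFrequencyDefinition}.
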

    
    \begin{proof}
The bound  $0 \leq \alpha_{i,j} \leq 1$ is clear from \eqref{eq:alpha_of_kappa} and \eqref{eq:mu}.  
Moreover rearranging \eqref{eq:alpha_of_kappa} gives
    \begin{equation}
    \label{eq:alpha_of_kappa_2}
1-\alpha_{i,j}
  =
    \frac{\mu_{i,j}}{\kappa_{i,j}}
.
    \end{equation}
Using \eqref{eq:collisionFrequencyDefinition} and \eqref{eq:alpha_of_kappa_2}, the left-hand side of \eqref{eq:deltaSymmetry} is
    \begin{equation}
\rho_{i} \lambda_{i,j} (1- \alpha_{i,j}) 
  = 
    \rho_{i} \frac{\kappa_{i,j}\xi_{i,j}}{n_i} \frac{\mu_{i,j}}{\kappa_{i,j}}
  = 
    \xi_{i,j}\frac{m_i + m_j}{2}
,
    \end{equation}
and the right-hand side is
    \begin{equation}
\rho_{j} \lambda_{j,i} (1- \alpha_{j,i}) 
  = 
    \rho_{j} \frac{\kappa_{j,i}\xi_{j,i}}{n_j} \frac{\mu_{j,i}}{\kappa_{j,i}}
  =
    \xi_{j,i}\frac{m_j + m_i}{2}
.
    \end{equation}
Since $\xi_{i,j}=\xi_{j,i}$, the symmetry in \eqref{eq:deltaSymmetry} holds.

Next, using \eqref{eq:betaDef}, \eqref{eq:collisionFrequencyDefinition}, and \eqref{eq:alpha_of_kappa_2},
    \begin{equation}
1-\beta_{i,j}
  = 
    \frac{1}{2\kappa_{i,j}}
,    
    \end{equation}
so that the left-hand side of \eqref{eq:betaSymmetry} is
    \begin{equation}
n_i\lambda_{i,j}(1-\beta_{i,j})
  =
    n_i\frac{\kappa_{i,j}\xi_{i,j}}{n_i}\frac{1}{2\kappa_{i,j}}
  =
    \frac{\xi_{i,j}}{2}
,
    \end{equation}
and the right-hand side is
    \begin{equation}
n_i\lambda_{j,i}(1-\beta_{j,i})
  =
    n_j\frac{\kappa_{j,i}\xi_{j,i}}{n_j}\frac{1}{2\kappa_{j,i}}
  =
    \frac{\xi_{j,i}}{2}
.
    \end{equation}
Since $\xi_{i,j}=\xi_{j,i}$, the symmetry in \eqref{eq:betaSymmetry} holds.

Finally, using \eqref{eq:gammaDef}, \eqref{eq:collisionFrequencyDefinition}, and \eqref{eq:alpha_of_kappa_2},
    \begin{equation}
\gamma_{i,j}
  =
    \frac{m_j}{2\kappa_{i,j}}
,
    \end{equation}
so that the left-hand side of \eqref{eq:gammaSymmetry} is
    \begin{equation}
\delta_{i,j}-2n_i\lambda_{i,j}\gamma_{i,j}
  =
    m_in_i\frac{\kappa_{i,j}\xi_{i,j}}{n_i}\frac{\mu_{i,j}}{\kappa_{i,j}} 
    -
    2n_i\frac{\kappa_{i,j}\xi_{i,j}}{n_i}\frac{m_j}{2\kappa_{i,j}}
  =
    \xi_{i,j}\frac{m_i-m_j}{2}
,
    \end{equation}
and the right-hand side is
    \begin{equation}
2n_j\lambda_{j,i}\gamma_{j,i} - \delta_{j,i}
  =
    2n_j\frac{\kappa_{j,i}\xi_{j,i}}{n_j}\frac{m_i}{2\kappa_{j,i}}
    -
    m_jn_j\frac{\kappa_{j,i}\xi_{j,i}}{n_j}\frac{\mu_{j,i}}{\kappa_{j,i}}
  =
    \xi_{j,i}\frac{m_i-m_j}{2}
.
    \end{equation}
Since $\xi_{i,j}=\xi_{j,i}$, the symmetry in \eqref{eq:gammaSymmetry} holds.
    \end{proof}

    \begin{corollary}
For every $1 \leq i, j \leq N$, let $\alpha_{i,j}$, $\beta_{i,j}$, and $\gamma_{i,j}$ be given by \eqref{eq:alphaBetaGamma_defs}, and let $\lambda_{i,j}$ be given by \eqref{eq:collisionFrequencyDefinition}, where $\kappa_{i,j}$ satisfies \eqref{eq:mu}.  
Then (i) $\alpha_{i,j}$, $\beta_{i,j}$, and $\gamma_{i,j}$ satisfy \Cref{ass1}.
Furthermore, any solution $\bff = [f_i ,\cdots, f_N]^\top$ of the M-LB model \eqref{eq:MLB}-\eqref{eq:MLB_operator} (ii) satisfies the pairwise-wise conservation laws (\Cref{ass:conservation}), (iii) satisfies an entropy dissipation law and an $\cH$-Theorem  (\Cref{theorem:HTheorem}), and (iv) matches the pairwise momentum and temperature relaxation rates derived from the Boltzmann collision operator with Coulomb potential (\Cref{theorem:matching}).  
    \end{corollary}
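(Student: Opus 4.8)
The corollary collects the conclusions of \Cref{section:conservationLaws,section:HTheorem,section:matchingRelaxationRates}; the plan is to verify that the hypotheses of each cited statement are met and to invoke them in the correct order, with no new computation. I would begin with (i): since $\kappa_{i,j}\geq\mu_{i,j}$ for every $i,j$, \Cref{thm:c} gives $0\leq\alpha_{i,j}\leq1$, and \Cref{lemma:beta_gamma} then upgrades this to $0\leq\beta_{i,j}\leq1$ and $\gamma_{i,j}\geq0$ (equivalently, one reads these bounds directly off \eqref{eq:betaDef} and \eqref{eq:gammaDef}). These are precisely the constraints \eqref{eq:ij_parameter_constraints}, so \Cref{ass1} holds.

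For (ii), \Cref{thm:c} also asserts that, for $1\leq i<j\leq N$, the pairs $(\alpha_{i,j},\alpha_{j,i})$, $(\beta_{i,j},\beta_{j,i})$, and $(\gamma_{i,j},\gamma_{j,i})$ satisfy the symmetry relations \eqref{eq:modelSymmetries}. Since \eqref{eq:modelSymmetries} is an algebraic rearrangement of the parameter relations \eqref{eq:ji_parameter_defs}, the ``if'' direction of \Cref{prop:parameters_ji} applies and yields the pairwise momentum and energy conservation identities of \Cref{ass:conservation} for every $i,j$.

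Part (iii) then follows from \Cref{theorem:HTheorem} applied to any solution $\bff=[f_1,\cdots,f_N]^\top$ of \eqref{eq:MLB}--\eqref{eq:MLB_operator}: its proof uses the sign conditions of \Cref{ass1}, now available from (i), together with the conservation symmetry \eqref{eq:betaSymmetry}, now available from (ii). This gives $\partial_t\cH[\bff]\leq0$, with equality if and only if each $f_i$ is the Maxwellian \eqref{eq:equilibrium} built from the time-invariant $\bfu^\infty$ and $T^\infty$. Finally, (iv) is immediate: the parameters are, by assumption, those of \eqref{eq:alphaBetaGamma_defs}, which is exactly the content of \Cref{theorem:matching}, so the M-LB relaxation rates \eqref{eq:MLB_relaxationRates} coincide with the Boltz-C rates \eqref{eq:BoltzC_relaxationRates}.

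There is no real obstacle here beyond bookkeeping; the one point that needs attention is the logical ordering. \Cref{theorem:HTheorem} cannot be invoked until both the parameter sign constraints and the conservation-type symmetry are in hand, so (iii) must be deduced from (i) and (ii) rather than established on its own, and (ii) in turn relies on the symmetry claim in \Cref{thm:c}, which is where the hypothesis $\kappa_{i,j}\geq\mu_{i,j}$ (equivalently \eqref{eq:mu}) does its work.
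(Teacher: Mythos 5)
Your proposal is correct and follows essentially the same route as the paper: (i) from the bound \eqref{eq:mu} together with \Cref{lemma:beta_gamma}, (ii) from \Cref{thm:c} plus the equivalence in \Cref{prop:parameters_ji}, and (iii)--(iv) as direct invocations of \Cref{theorem:HTheorem} and \Cref{theorem:matching}. Your extra remark about the logical ordering (that the $\cH$-Theorem needs both \Cref{ass1} and the conservation symmetries first) is a fair elaboration of the same argument, not a different one.
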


    \begin{proof}
Claim (i) is a restatement of \Cref{lemma:beta_gamma}, along with the bound in \eqref{eq:mu}.
Claim (ii) follows from \Cref{thm:c} and the equivalence of \Cref{ass:conservation} and \eqref{eq:ji_parameter_defs} (see \Cref{prop:parameters_ji}).
Claims (iii) and (iv) are summaries of the theorems referenced in the statement of the corollary.
    \end{proof}
    
    \section{Numerics}
    \label{section:numerics}
    
For simulation purposes, we propose an implicit Euler update of \eqref{eq:MLB} which takes the form (cf. \eqref{eq:MLB_operator})
    \begin{equation}
\frac{f_i^{\mfn+1}-f_i^\mfn}{\dt}
  = 
    \sum_j\lambda_{i,j}^{\mfn+1}\theta_{i,j}^{\mfn+1}\cL_{i,j}^{\mfn+1},
\;\;\textnormal{where}\;\;
\cL_{i,j}^{\mfn+1}
  =
    \nabla_{\bfv}\cdot
    \left(
      M_{i,j}^{\mfn+1}\nabla_{\bfv}
      \left(
        \frac{f_i^{\mfn+1}}{M_{i,j}^{\mfn+1}}
      \right)
    \right)
.   
    \label{eq:MLB_BEStep}
    \end{equation}
Because $\lambda_{i,j}^{\mfn+1}$, $\theta_{i,j}^{\mfn+1}$, and the mixture Maxwellians, $M_{i,j}^{\mfn+1}$ in $\cL^{\mfn+1}_{i,j}$ are defined in terms of the moment quantities $\rho_i^{\mfn+1}$, $\bfu_i^{\mfn+1}$, and $T_i^{\mfn+1}$, it is useful to first derive and solve a system of equations for these moments.
What remains afterward is the inversion of a linear tridiagonal system, the coefficients of which are expressed in terms of the updated moments.

    \subsection{Implicit moment update}
    \label{section:momentODEsystem}
    
To derive a set of moment equations for $\rho^\mfn_i$, $\bfu^\mfn_i$, and $T_i^\mfn$, we multiply \eqref{eq:MLB_BEStep} by $m_i(1,\bfv,|\bfv|^2)^\top$ and integrate over $\bfv \in \bbR^d$.
Then, applying \Cref{lemma:momentProperties} and the definitions of $\bfu_{i,j}$ and $T_{i,j}$ in \eqref{eq:u_ij_and_T_ij_defs}, gives the following Backward Euler update 
    \begin{subequations}
    \label{eq:BEUpdate}
    \begin{align}
    \label{eq:densityUpdate}
\frac{\rho_i^{\mfn+1}-\rho_i^\mfn}{\dt}
  &= 0
,
    \\
    \label{eq:momentumUpdate}
\rho_i
\frac{\bfu_i^{\mfn+1}-\bfu_i^\mfn}{\dt}
  &=
    \sum_j
    \rho_i\lambda_{i,j}^{\mfn+1}(1-\alpha_{i,j})
    (\bfu_j^{\mfn+1}-\bfu_i^{\mfn+1})
,
    \\
    \label{eq:energyUpdate}
    \begin{split}
\rho_i
\frac{s_i^{\mfn+1}-s_i^\mfn}{\dt}
  +
dn_i
\frac{T_i^{\mfn+1}-T_i^\mfn}{\dt}
  &=
    2d\sum_j
    n_i\lambda_{i,j}^{\mfn+1}(1-\beta_{i,j})
    (T_j^{\mfn+1}-T_i^{\mfn+1})
    \\
  + 2 \sum_jn_i\lambda_{i,j}^{\mfn+1}(1-\beta_{i,j}&)
     \big[
        m_j(
        s_j^{\mfn+1}-\bfu_i^{\mfn+1}\cdot\bfu_j^{\mfn+1})
        -
        m_i(s_i^{\mfn+1}-\bfu_i^{\mfn+1}\cdot\bfu_j^{\mfn+1})
      \big]
,
    \end{split}
    \end{align}
    \end{subequations}
where $s_i^\mfn=|\bfu_i^\mfn|^2$, and we drop the superscripts on $n_i$ and $\rho_i$ due to \eqref{eq:densityUpdate}.
The system given here is consistent with the backward Euler step of the ODE system \eqref{eq:momentODEsystem}, when the matching conditions in \eqref{eq:alphaBetaGamma_defs} hold.
Summing \eqref{eq:momentumUpdate} and \eqref{eq:energyUpdate} over $i\in\{1,\cdots,N\}$ and using the symmetries \eqref{eq:deltaSymmetry} and \eqref{eq:betaSymmetry}
implies that the discrete versions of the conservation of total momentum and total energy hold:
    \begin{equation}
\sum_i\rho_i\bfu_i^{\mfn+1} = \sum_i\rho_i\bfu_i^\mfn
\qquand
\sum_iE_i^{\mfn+1} = \sum_iE_i^\mfn
.
    \end{equation}

The form of the implicit update in \eqref{eq:BEUpdate} is almost identical to the moment system in \cite{implicitPaper} derived for a multi-species BGK model.
The only difference is the factor of two in the energy equation and the appearance of $\bfu_i^{\mfn+1}\cdot\bfu_j^{\mfn+1}$ instead of $|\bfu_{i,j}^{\mfn+1}|^2$.
For this reason, we adapt the nonlinear Gauss-Seidel type (GST) iterative method of \cite{implicitPaper} to solve \eqref{eq:BEUpdate}.
Given iteration index $\ell$, the iterative update for the momentum and energy equations within each time step takes the form    
    \begin{subequations}
    \label{eq:GST}
    \begin{align}
&\rho_i
\frac{\bfu_i^{\mfn+1,\ell+1}-\bfu_i^\mfn}{\dt}
  =
    \sum_j
    \rho_i\lambda_{i,j}^{\mfn+1,\ell}(1-\alpha_{i,j})
    (\bfu_j^{\mfn+1,\ell+1}-\bfu_i^{\mfn+1,\ell+1})
    \\
&\rho_i
\frac{s_i^{\mfn+1,\ell+1}\!-s_i^\mfn}{\dt}
  +
dn_i
\frac{T_i^{\mfn+1,\ell+1}\!-T_i^\mfn}{\dt}
  =
    2d\sum_j
    n_i\lambda_{i,j}^{\mfn+1,\ell}(1-\beta_{i,j})
    (T_j^{\mfn+1,\ell+1}\!-T_i^{\mfn+1,\ell+1})
    \\
&  + 2 \sum_jn_i\lambda_{i,j}^{\mfn+1,\ell}(1-\beta_{i,j})
     \bigg[
        m_j(s_j^{\mfn+1,\ell+1}-\bfu_i^{\mfn+1,\ell+1}\cdot\bfu_j^{\mfn+1,\ell+1})
    -
        m_i(s_i^{\mfn+1,\ell+1}-\bfu_i^{\mfn+1,\ell+1}\cdot\bfu_j^{\mfn+1,\ell+1})
      \bigg]
,    
    \nonumber
    \end{align}
    \end{subequations}
where
    \begin{equation}
\lambda_{i,j}^{\mfn+1,\ell}
  = 
    \widehat{\lambda}_{i,j}
      (T_i^{\mfn+1,\ell}, T_j^{\mfn+1,\ell})
,  
    \end{equation}
and $\widehat{\lambda}_{i,j}$ is given by \eqref{eq:collisionFrequencyDefinition} and \eqref{eq:xiDefinition}.
Once $\bfu^{\mfn+1}_i$ and $T^{\mfn+1}_i$ have been evaluated using the GST method \eqref{eq:GST}, they can be used to update the mixture terms $\bfu^{\mfn+1}_{i,j}$ and $T^{\mfn+1}_{i,j}$ via \eqref{eq:u_ij_and_T_ij_defs}, as well as the Maxwellian terms $M_{i,j}^{\mfn+1}$ via \eqref{eq:mixtureMaxwellian_and_theta_ij_def}.

    \subsection{Collision term update}

After $\bfu_i^{\mfn+1}$ and $T_i^{\mfn+1}$ are determined, \eqref{eq:MLB_BEStep} is a linear equation.
For simplicity, we focus on the case $d=1$, but the discretization below can be extended to a general dimension $d$.
We truncate the velocity space at some sufficiently large value, $V=[-v_{\max},v_{\max}]$, and consider a cell centered velocity grid with $N_v$ grid cells of uniform size
$h_v = 2v_{\max}/N_v$, with $v_{k} = -v_{\max} + (k-1/2)h_v$
defined for $k$ at integer and half integer values, so that $v_{1/2}=-v_{\max}$ and $v_{N_v+1/2}=v_{\max}$.

Approximating velocity derivatives in $\cL_{i,j}$ by central differences gives, for $k\in\{1,\cdots,N_v\}$,
    \begin{equation}
    \label{eq:discreteCollisionOperator}
    \begin{split}
\left[
  \cL_{i,j}
\right]_{k}
  &=
    \left[
      \diff_{v}
      \left(
        M_{i,j}\diff_{v}
        \left(
          \frac{f_i}{M_{i,j}}
        \right)
      \right)
    \right]_{k}
    \\
  &\approx
    \frac{[M_{i,j}]_{k+\frac{1}{2}}[\diff_{v}(f_i/M_{i,j})]_{k+\frac{1}{2}} - [M_{i,j}]_{k-\frac{1}{2}}[\diff_{v}(f_i/M_{i,j})]_{k-\frac{1}{2}}}{\Delta v}
    \\
  &\approx
    \frac{[M_{i,j}]_{k+\frac{1}{2}}
    \left[
      \left[
        \frac{f_i}{M_{i,j}}
      \right]_{k+1}
      -
      \left[
        \frac{f_i}{M_{i,j}}
      \right]_{k}
    \right]
    -
    [M_{i,j}]_{k-\frac{1}{2}}
    \left[
      \left[
        \frac{f_i}{M_{i,j}}
      \right]_{k}
      -
      \left[
        \frac{f_i}{M_{i,j}}
      \right]_{k-1}
    \right]}{\Delta v^2}
    \\
  &=
    \frac{
      a_{k-1}[f_i]_{k-1}
      -
      b_{k}[f_i]_{k}
      +
      c_{k+1}[f_i]_{k+1}}{ \Delta v^2}
,
    \end{split}
    \end{equation}
where
    \begin{equation}
a_{k-1}
  =
    \frac{[M_{i,j}]_{k-\frac{1}{2}}}{[M_{i,j}]_{k-1}}
,\qquad
b_k 
  = 
    \frac{[M_{i,j}]_{k-\frac{1}{2}}}{[M_{i,j}]_{k}}
    +
    \frac{[M_{i,j}]_{k+\frac{1}{2}}}{[M_{i,j}]_{k}}
,\qquad
c_{k+1}
  =
    \frac{[M_{i,j}]_{k+\frac{1}{2}}}{[M_{i,j}]_{k+1}}
,
    \end{equation}
and the dependence of $a_{k}$, $b_{k}$, and $c_{k}$ on $i$ and $j$ has been suppressed to simplify notation.
If we set $[M_{i,j}]_{k+\frac{1}{2}}=\frac{1}{2}([M_{i,j}]_{k}+[M_{i,j}]_{k+1})$, then
    \begin{equation}
a_{k-1}
  = \frac{[M_{i,j}]_{k-1}+[M_{i,j}]_{k}}{2[M_{i,j}]_{k-1}}
,\quad
b_{k}
  =
    \frac{[M_{i,j}]_{k-1} + 2[M_{i,j}]_{k} + [M_{i,j}]_{k+1}}{2[M_{i,j}]_{k}}
,\quad
c_{k+1}
  = \frac{[M_{i,j}]_{k}+[M_{i,j}]_{k+1}}{2[M_{i,j}]_{k+1}}
.
    \end{equation}

To obtain no-flux boundary conditions (for velocity space), we set $[M_{i,j}]_{\frac{1}{2}}=0=[M_{i,j}]_{N_v+\frac{1}{2}}$, which implies that
    \begin{equation}
a_{0}
  = 
    \frac{[M_{i,j}]_{\frac{1}{2}}}{[M_{i,j}]_{0}} = 0
,\quad\!
b_{1}
  =
    \frac{[M_{i,j}]_{1+\frac{1}{2}}}{[M_{i,j}]_{1}}
,\quad\!
b_{N_v}  
  =
    \frac{[M_{i,j}]_{N_v-\frac{1}{2}}}{[M_{i,j}]_{N_v}}
,\quad\!
c_{N_v+1}
  =
    \frac{[M_{i,j}]_{N_v+\frac{1}{2}}}{[M_{i,j}]_{N_v+1}} = 0
.
    \end{equation}
Rearranging \eqref{eq:MLB_BEStep} and using \eqref{eq:discreteCollisionOperator} gives
    \begin{equation}
[f_i^{\mfn+1}]_{k}
  \approx
    [f_i^\mfn]_{k}
    +
    \frac{\dt}{\Delta v^2}\sum_j
    \lambda_{i,j}^{\mfn+1}\theta_{i,j}^{\mfn+1}
    \left(
      a_{k-1}^{\mfn+1}[f_i]_{k-1}^{\mfn+1}
      -
      b_{k}^{\mfn+1}[f_i]_{k}^{\mfn+1}
      +
      c_{k+1}^{\mfn+1}[f_i]_{k+1}^{\mfn+1}
    \right)
,
    \end{equation}
or in vectorized form,
    \begin{equation}
    \label{eq:implicitUpdateMatrixEq}
\left(
  I+\frac{\dt}{\Delta v^2}\sum_j\lambda_{i,j}^{\mfn+1}\theta_{i,j}^{\mfn+1}G_{i,j}^{\mfn+1}
\right)
\vec{\bff}^{\mfn+1}_i
  =
    \vec{\bff}^\mfn_i
,
    \end{equation}
where
    \begin{equation}
\vec{\bff}^{\mfn+1}_i
  =
    \bigg([f_i^{\mfn+1}]_{k=1},\cdots,[f_i^{\mfn+1}]_{k=N_v}\bigg)^\top
,
    \end{equation}
and
    \begin{equation}
G_{i,j}^{\mfn+1}
  = 
\begin{pmatrix}
  b_{1}^{\mfn+1}
  & -c_{2}^{\mfn+1}
  & 0 & \cdots & & \cdots & 0
    \\
  -a_{1}^{\mfn+1}
  & b_{2}^{\mfn+1}
  & -c_{3}^{\mfn+1} & 0 & \cdots & \cdots & 0
    \\
  0 & -a_{2}^{\mfn+1}
  & b_{3}^{\mfn+1}
  & -c_{4}^{\mfn+1} & 0 & \cdots & 0
    \\
  \vdots &  & \ddots & \ddots & \ddots & & \vdots
    \\
    \\
    \\
  0 & \cdots & & 0 & -a_{N_v-2}^{\mfn+1}
  & b_{N_v-1}^{\mfn+1}
  & -c_{N_v}^{\mfn+1}
    \\
  0 & \cdots & & & 0 & -a_{N_v-1}^{\mfn+1}
  & b_{N_v}^{\mfn+1}
    \end{pmatrix}
.
    \end{equation}
The values of the Maxwellians $M_{i,j}^{\mfn+1}$ and hence coefficients of $G_{i,j}^{\mfn+1}$ are explicitly computed with the moment values determined by the GST algorithm in \Cref{section:momentODEsystem}.
After this, the implicit update of the nonlinear collision operator in \eqref{eq:MLB_BEStep} requires only the linear inversion of \eqref{eq:implicitUpdateMatrixEq}.

    \section{Numerical examples}
    \label{section:numericalExamples}
    
To provide some numerical examples, we appeal to \cite{particleMethod_2023}, which considers a multi-species LFP collision operator.
We use simple constants rather than physically correct values for the physical parameters:
    \begin{equation}
n_1 = n_2 = 1
  ,\qquad
q_i = q_j =1
  ,\qquad
|\log\Lambda_{i,j}|=1
  ,\qquad
\epsilon_0=1
.
    \end{equation}
For both tests, the initial distributions are set to Maxwellian initial conditions with
    \begin{align}
f_1(0,v)
  &=
    M_{1\,,\,0.5\,,\,0.25}^{(1)}
\qquand
f_2(0,v)
  =
    M_{1\,,\,-0.25\,,\,0.125}^{(1)}
,
    \end{align}
where $M_{n,\bfu,\theta}^{(d)}$ is the $d$-dimensional Maxwellian defined in \eqref{eq:maxwellianDefinition}.
To ensure that $\alpha_{i,j}$, $\beta_{i,j}$, and $\gamma_{i,j}$ satisfy the desired bounds, we enforce \eqref{eq:mu}, i.e. $\kappa_{i,j}\geq\frac{m_i+m_j}{2m_i}\eqqcolon\mu_{i,j}$ for all $i,j$.

The computational domain for the velocity space is truncated to $[-4,4]$ (as in \cite{particleMethod_2023}), and is discretized using a cell-centered grid over $N_v=80$ uniform cells.
The results below were generated using $\dt = 0.2$; smaller time steps gave qualitatively similar results.

For the first test, we set $m_1=m_2=1$.
Thus $\mu_{i,j} \leq 1$, and the simulation is run with $\kappa_{i,j} = 2$ for all $i,j$.
The results are given in \Cref{figure:mRatio1}.
For the second test, we let $m_1=2$ and $m_2=1$, and set $\kappa_{i,j}=2\max\{\mu_{i,j}\}=2\mu_{2,1} = 3$.
The results are given in \Cref{figure:mRatio2}.
In both examples, the long time behavior of the moments is as expected.
That is, both species relax toward the steady-state bulk velocity and temperature given by  $\bfu^\infty$ and $T^\infty$ in \eqref{eq:u_inf_and_T_inf_defs}.

\begin{figure}[ht]
    \centering
    \begin{subfigure}[t]{0.48\textwidth}
        \centering
        \includegraphics[width=\textwidth]{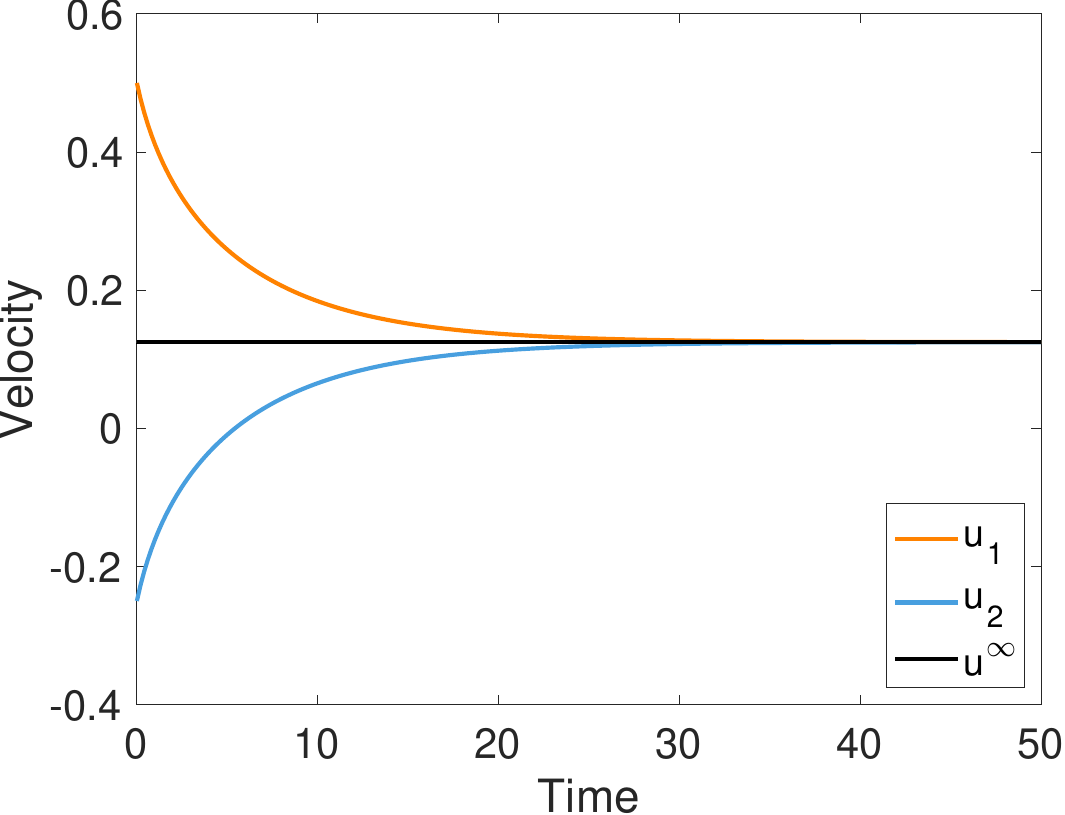}
        \caption{Species velocities relax to steady state, $\bfu^\infty$.}
    \end{subfigure}
    \hfill
    \begin{subfigure}[t]{0.48\textwidth}  
        \centering 
        \includegraphics[width=\textwidth]{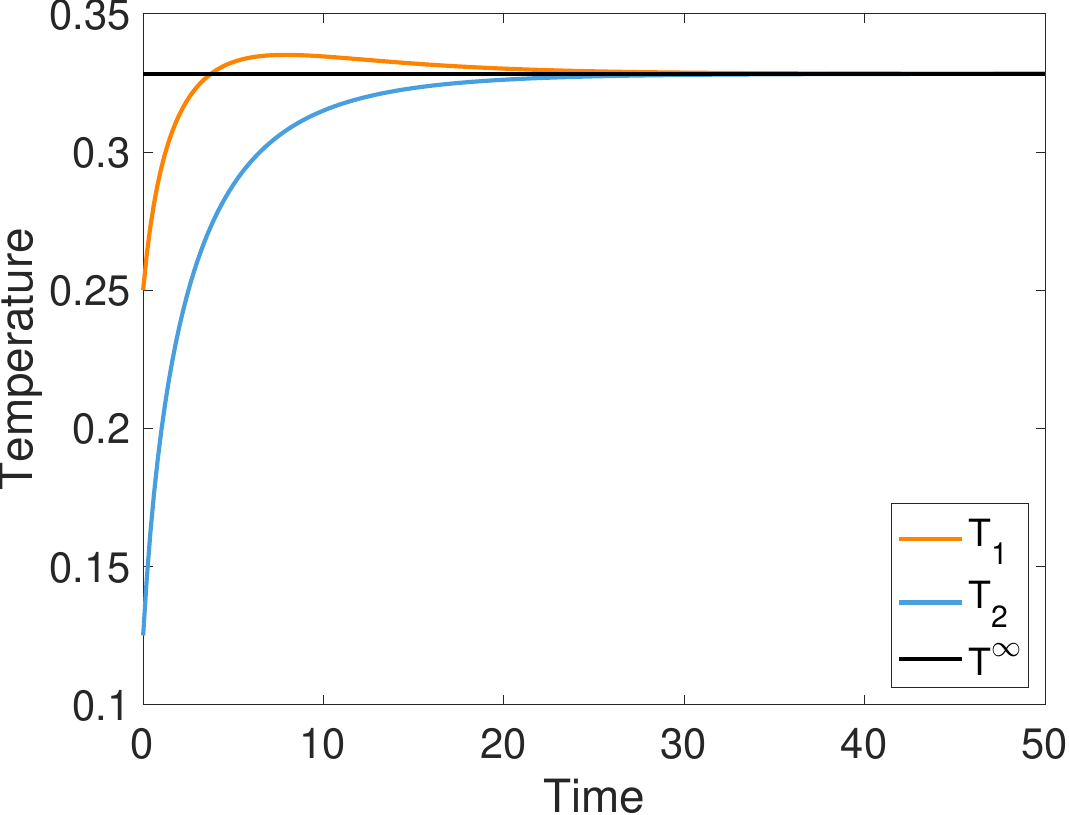}
        \caption{Species temperatures relax to steady state, $T^\infty$.}
    \end{subfigure}
    \caption{Case: Equal Masses: The velocities and temperatures relax to steady state values.}
    \label{figure:mRatio1}
\end{figure}

\begin{figure}[ht]
    \centering
    \begin{subfigure}[t]{0.48\textwidth}
        \centering
        \includegraphics[width=\textwidth]{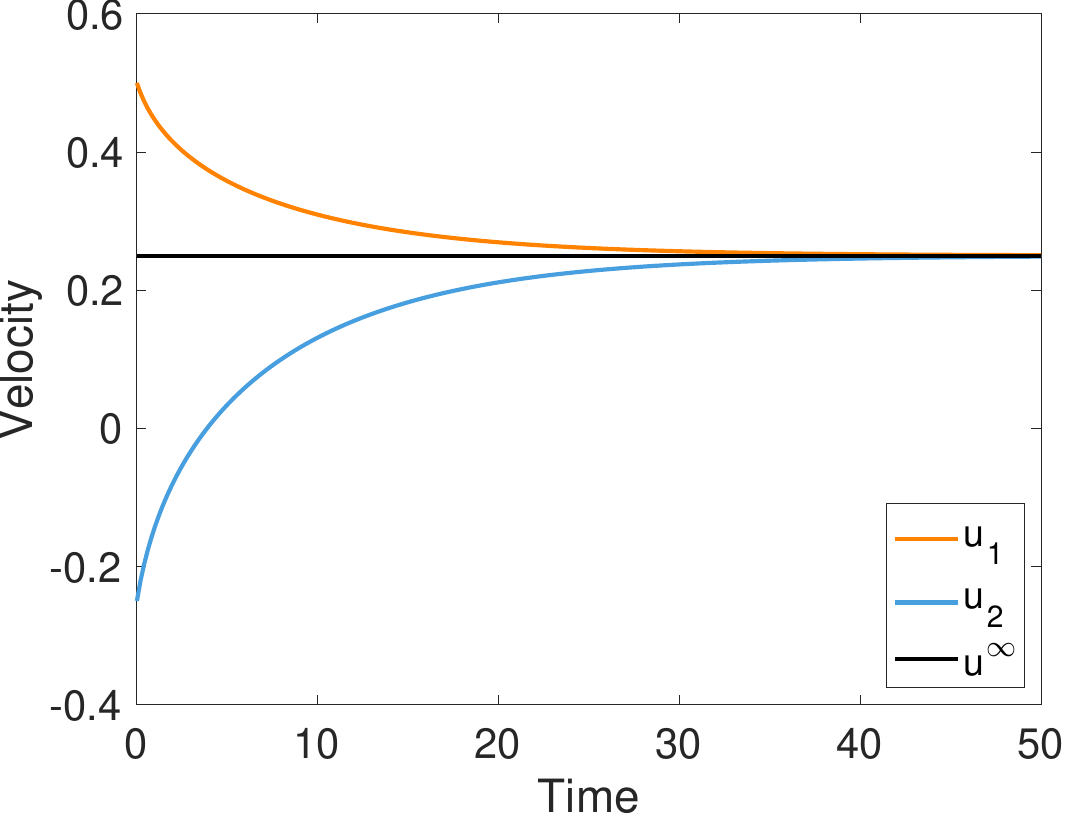}
        \caption{Species velocities relax to steady state, $\bfu^\infty$.}
    \end{subfigure}
    \hfill
    \begin{subfigure}[t]{0.48\textwidth}  
        \centering 
        \includegraphics[width=\textwidth]{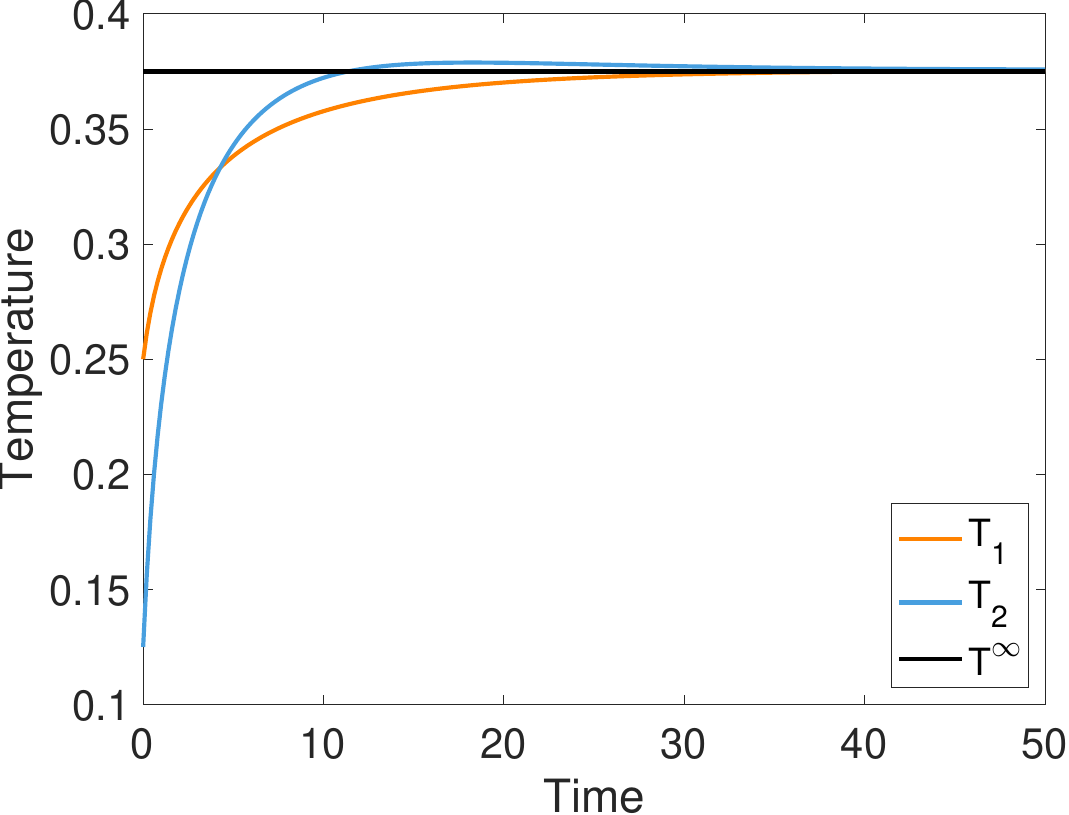}
        \caption{Species temperatures relax to steady state, $T^\infty$.}
    \end{subfigure}
    \caption{Case: Mass Ratio$= 2$. The velocities and temperatures relax to steady state values.}
    \label{figure:mRatio2}
\end{figure}
    
    \section{Conclusions}

In this paper, we have derived a multi-species Lenard-Bernstein (M-LB) model that satisfies (i) conservation of species mass and global conservation of momentum and energy; (ii) dissipates entropy and satisfies an  H-Theorem; and (iii) matches the pairwise momentum and temperature relaxation rates of the Boltzmann collision operator with a Coulomb potential.
We have also demonstrated a simple numerical algorithm for implementing the (M-LB) model with implicit time stepping.
In future work, we hope to embed this algorithm into a full kinetic simulation capability, such as \cite{endeve2022conservative} or \cite{schnake2024sparse}.

    \appendix

    \section{Comparison of models}

During the finalization of this manuscript, the authors became aware of very recent and independent work on the M-LB model \cite{pirner2024consistent}.
This work does not consider the problem of matching relaxation rates or perform any numerical experiments, but it does establish conditions on the coefficients of the mixture temperature and velocity that are sufficient for the existence of an $\cH$-Theorem.
Below we show that the conditions made in \cite{pirner2024consistent} are sufficient to imply the conditions made in \Cref{ass1}, which we believe provide a more transparent physical interpretation, and allow us to establish the entropy dissipation law and $\cH$-Theorem under slightly weaker conditions.
We provide more details below, using the notation of \cite{pirner2024consistent} when possible, but relying on the notation used here when there is a conflict.  
For clarity we consider, as in \cite{pirner2024consistent}, that $i \in \{1,2\}$, although this is not a necessary restriction for either work.

Let
    \begin{equation}
c_{1,2}
    = \frac{\lambda_{1,2}}{n_2}
,\qquad
c_{2,1}
    = \frac{\lambda_{2,1}}{n_1}
,\qquand
\varepsilon
    = \frac{c_{1,2}}{c_{2,1}} 
    = \frac{n_1}{n_2}\frac{\lambda_{1,2}}{\lambda_{1,2}}
    \geq 0
. 
    \end{equation}
In terms of $\varepsilon$, the conditions in \eqref{eq:ji_parameter_defs} become
    \begin{subequations}
    \label{eq:21_parameter_defs}
    \begin{align}
    \label{eq:alpha_21_def}
\alpha_{2,1}
  &=
    1-\frac{\rho_1\lambda_{1,2}}{\rho_2\lambda_{2,1}}(1-\alpha_{2,1}),
    \\
    \label{eq:beta_21_def}
\beta_{2,1}
  &=
    1-\frac{n_1\lambda_{1,2}}{n_2\lambda_{2,1}}(1-\beta_{1,2}),
    \\
     \label{eq:gamma_21_def}
 \gamma_{2,1}
   &=
      m_1 \varepsilon (1-\alpha_{1,2}) - \varepsilon \gamma_{1,2}
.
    \end{align}
    \end{subequations}
    
    \begin{ass}
    \label{ass:pirner}
For $i=1,2$, the coefficients $\alpha_{i,j}$, $\beta_{i,j}$, and $\gamma_{i,j}$ satisfy the following conditions:
    \begin{subequations}
    \begin{align}
    \label{eq:pirner_eps1}
&\varepsilon  \leq 1
,
    \\
    \label{eq:pirner_eps2}
&\varepsilon  \leq \frac{m_2}{m_1}
, 
    \\
    \label{eq:pirner_gamma}
&0  \leq \gamma_{1,2} \leq m_1(1 - \alpha_{1,2})
,
    \\
    \label{eq:pirner_alpha}
&\frac{\varepsilon}{1+\varepsilon}
  \leq
    \alpha_{1,2}
  \leq
    1
,
    \\
    \label{eq:pirner_beta}
&\frac{\varepsilon}{1+\varepsilon}
  \leq
    \beta_{1,2} 
  \leq
    1
.
    \end{align}
    \end{subequations}
    \end{ass}
    
    \begin{remark}
We list the conditions above in the order they appear in \cite{pirner2024consistent}.
The conditions in \eqref{eq:pirner_eps1} and \eqref{eq:pirner_eps2} are given in \cite[Eq. 13]{pirner2024consistent}.
The condition in \eqref{eq:pirner_gamma} is given in \cite[Eq. 20]{pirner2024consistent}.
The conditions in \eqref{eq:pirner_alpha} and \eqref{eq:pirner_beta} are given in \cite[Eq. 23]{pirner2024consistent}.
A third condition for entropy dissipation and an H-Theorem is given in \cite[Eq. 23]{pirner2024consistent}, but we do not need it here.
    \end{remark}

    \begin{theorem}
Suppose the pairwise conservation laws in \Cref{ass:conservation} hold.
Then for $i=1,2$, the conditions in \Cref{ass:pirner} imply the conditions in \Cref{ass1}.
    \end{theorem}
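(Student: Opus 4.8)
The plan is to reduce everything to \Cref{prop:parameters_ji}: under the pairwise conservation laws of \Cref{ass:conservation}, that proposition (rewritten in terms of $\varepsilon$ as \eqref{eq:21_parameter_defs}) expresses $\alpha_{2,1},\beta_{2,1},\gamma_{2,1}$ explicitly through the ``$1,2$'' parameters, and the remaining work is to verify each of the inequalities in \Cref{ass1} for the pairs $(1,2)$ and $(2,1)$ by elementary sign-chasing, using the five bounds of \Cref{ass:pirner} together with $\varepsilon\geq 0$. The diagonal pairs $(i,i)$ are immediate, since $\alpha_{i,i}=\beta_{i,i}=1$ and $\gamma_{i,i}=0$.

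First I would dispatch the ``$1,2$'' coefficients. Since $\varepsilon\geq 0$ we have $\varepsilon/(1+\varepsilon)\geq 0$, so \eqref{eq:pirner_alpha} and \eqref{eq:pirner_beta} immediately give $0\leq\alpha_{1,2},\beta_{1,2}\leq 1$, and the left inequality of \eqref{eq:pirner_gamma} gives $\gamma_{1,2}\geq 0$. I would also record the consequences $1-\alpha_{1,2}\in[0,1]$ and $1-\beta_{1,2}\in[0,1]$, which feed into the next step.

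Next I would handle the ``$2,1$'' coefficients using \eqref{eq:21_parameter_defs}, reading the first relation as $\alpha_{2,1}=1-\tfrac{\rho_1\lambda_{1,2}}{\rho_2\lambda_{2,1}}(1-\alpha_{1,2})=1-\tfrac{m_1}{m_2}\varepsilon(1-\alpha_{1,2})$ (here $\tfrac{\rho_1\lambda_{1,2}}{\rho_2\lambda_{2,1}}=\tfrac{m_1}{m_2}\varepsilon$; the displayed \eqref{eq:alpha_21_def} contains a transcription slip, and one should use the form from \Cref{prop:parameters_ji}). For $\alpha_{2,1}$: the bound $\alpha_{2,1}\leq 1$ holds because the subtracted term is nonnegative, and $\alpha_{2,1}\geq 0$ holds because $\tfrac{m_1}{m_2}\varepsilon\leq 1$ by \eqref{eq:pirner_eps2} while $1-\alpha_{1,2}\leq 1$, so the subtracted term is at most $1$. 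For $\beta_{2,1}=1-\varepsilon(1-\beta_{1,2})$: the bound $\beta_{2,1}\leq 1$ is again nonnegativity of the subtracted term, and $\beta_{2,1}\geq 0$ uses $\varepsilon\leq 1$ by \eqref{eq:pirner_eps1} together with $1-\beta_{1,2}\leq 1$. For $\gamma_{2,1}=\varepsilon\bigl(m_1(1-\alpha_{1,2})-\gamma_{1,2}\bigr)$: nonnegativity is exactly $\varepsilon\geq 0$ combined with the right inequality of \eqref{eq:pirner_gamma}.

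There is no genuine obstacle here: the argument is a short chain of one-line inequality manipulations. The only thing demanding care is tracking the direction of each inequality and correctly identifying the coefficient $\tfrac{m_1}{m_2}\varepsilon$ in the $\alpha_{2,1}$ relation. It is worth noting that \eqref{eq:pirner_eps1} and \eqref{eq:pirner_eps2} enter only through the lower bounds on $\beta_{2,1}$ and $\alpha_{2,1}$ respectively, which is consistent with the remark that the additional third condition of \cite{pirner2024consistent} is not needed.
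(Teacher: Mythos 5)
Your proposal is correct and follows essentially the same route as the paper's proof: invoke \Cref{prop:parameters_ji} (in the form \eqref{eq:21_parameter_defs}) under \Cref{ass:conservation}, get the $(1,2)$ bounds directly from \eqref{eq:pirner_alpha}, \eqref{eq:pirner_beta}, \eqref{eq:pirner_gamma}, and then sign-chase the $(2,1)$ coefficients using $\varepsilon\leq 1$ for $\beta_{2,1}$, $\varepsilon\leq m_2/m_1$ for $\alpha_{2,1}$, and the upper bound in \eqref{eq:pirner_gamma} for $\gamma_{2,1}$. You also correctly identify (and repair) the transcription slip in \eqref{eq:alpha_21_def} and correctly attribute the $\alpha_{2,1}$ bound to \eqref{eq:pirner_eps2}, where the paper's text cites \eqref{eq:pirner_eps1} a second time.
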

    
    \begin{proof}
The conditions in \eqref{eq:pirner_alpha} and \eqref{eq:pirner_beta} clearly imply that $0 \leq \alpha_{1,2}, \beta_{1,2} \leq 1$.
The condition in \eqref{eq:pirner_eps1} is equivalent to
    \begin{equation}
    \label{eq:MP13_1}
\frac{n_1}{n_2}\frac{\lambda_{1,2}}{\lambda_{2,1}}
  \leq
    1
.
    \end{equation}
According to \Cref{prop:parameters_ji}, the pairwise conservation of momentum and energy in \Cref{ass:conservation} are equivalent to \eqref {eq:21_parameter_defs}.
Thus according to \eqref{eq:beta_21_def}, \eqref{eq:MP13_1}, and the fact that $\beta_{1,2} \in  [0,1]$, it holds that $\beta_{2,1} \in [0,1]$.
Similarly, the condition in \eqref{eq:pirner_eps1} is equivalent to
    \begin{equation}
    \label{eq:MP13_2}
\frac{\rho_1}{\rho_2}
\frac{\lambda_{1,2}}{\lambda_{2,1}}
  \leq
    1
. 
    \end{equation}
Acccording to \eqref{eq:alpha_21_def}, \eqref{eq:MP13_2}, and the fact that $\alpha_{1,2} \in [0,1]$, it holds that $\alpha_{2,1} \in [0,1]$.
Finally, \eqref{eq:pirner_gamma} states directly that $\gamma_{1,2} \geq 0$.
Moreoever, when combined with \eqref{eq:gamma_21_def}, the upper bound in \eqref{eq:pirner_gamma} implies that $\gamma_{2,1} \geq 0$.
    \end{proof}

    \section{Model relaxation rates}
    \label{appendix:modelRelaxationRates}

In this appendix, we prove \eqref{eq:MLB_relaxationRates}, in \Cref{lemma:momentumRelaxationRates} and \Cref{lemma:temperatureRelaxationRates}.
    
    \begin{lemma}
    \label{lemma:momentumRelaxationRates}
Momentum relaxation rates between species $i$ and $j$ are given by
    \begin{equation}
    \label{eq:momentumDiffRelax}
\left.
  \frac{\partial}{\partial t}
  \left(
    \rho_i\bfu_i-\rho_j\bfu_j
  \right)
\right|_{\textnormal{M-LB}}
  = 
    2\rho_i\lambda_{i,j}(1-\alpha_{i,j})(\bfu_j-\bfu_i)
.
    \end{equation}
    \end{lemma}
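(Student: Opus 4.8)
The plan is to compute $\frac{\partial}{\partial t}(\rho_i \bfu_i)$ and $\frac{\partial}{\partial t}(\rho_j \bfu_j)$ separately from the M-LB system and then subtract. First I would note that $\rho_i$ is conserved (from $m_i \vint{\cC_{i,j}(f_i,f_j)} = 0$), so $\frac{\partial}{\partial t}(\rho_i \bfu_i) = m_i \vint{\bfv\, \partial_t f_i} = m_i \sum_{k} \vint{\bfv\, \cC_{i,k}(f_i,f_k)}$. Using the representation $\cC_{i,k}(f_i,f_k) = \lambda_{i,k}\theta_{i,k}\cL(f_i, M_{i,k})$ together with the moment identity \eqref{eq:momentProperty2} from \Cref{lemma:momentProperties}, each term evaluates to $m_i \lambda_{i,k}\theta_{i,k} \cdot \left(-\frac{\rho_i(\bfu_i - \bfu_{i,k})}{\theta_{i,k}}\right) = -\rho_i \lambda_{i,k}(\bfu_i - \bfu_{i,k})$. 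Then substituting $\bfu_{i,k} = \alpha_{i,k}\bfu_i + (1-\alpha_{i,k})\bfu_k$ from \eqref{eq:u_ij_def} gives $\bfu_i - \bfu_{i,k} = (1-\alpha_{i,k})(\bfu_i - \bfu_k)$, so $\frac{\partial}{\partial t}(\rho_i\bfu_i) = \sum_k \rho_i\lambda_{i,k}(1-\alpha_{i,k})(\bfu_k - \bfu_i)$, which is exactly \eqref{eq:momentumODE}.

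The key observation for the pairwise statement is that when we form $\frac{\partial}{\partial t}(\rho_i\bfu_i - \rho_j\bfu_j)$, we should not sum over all species but rather isolate the $i$–$j$ interaction; the claim is implicitly about the two-species setting (or equivalently, about the contribution of the single pair). In that setting $\frac{\partial}{\partial t}(\rho_i\bfu_i) = \rho_i\lambda_{i,j}(1-\alpha_{i,j})(\bfu_j - \bfu_i)$ and $\frac{\partial}{\partial t}(\rho_j\bfu_j) = \rho_j\lambda_{j,i}(1-\alpha_{j,i})(\bfu_i - \bfu_j)$. Subtracting yields
\begin{equation}
\frac{\partial}{\partial t}(\rho_i\bfu_i - \rho_j\bfu_j) = \big[\rho_i\lambda_{i,j}(1-\alpha_{i,j}) + \rho_j\lambda_{j,i}(1-\alpha_{j,i})\big](\bfu_j - \bfu_i).
\end{equation}
Now I invoke the symmetry relation \eqref{eq:deltaSymmetry}, namely $\delta_{i,j} = \rho_i\lambda_{i,j}(1-\alpha_{i,j}) = \rho_j\lambda_{j,i}(1-\alpha_{j,i}) = \delta_{j,i}$, which holds precisely because the pairwise momentum conservation in \Cref{ass:conservation} is assumed (via \Cref{prop:parameters_ji}). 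Hence the bracket equals $2\rho_i\lambda_{i,j}(1-\alpha_{i,j})$, giving the desired formula \eqref{eq:momentumDiffRelax}.

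The only genuine subtlety — and the one step I would be careful to state explicitly — is the reliance on the pairwise conservation assumption to convert $\rho_j\lambda_{j,i}(1-\alpha_{j,i})$ into $\rho_i\lambda_{i,j}(1-\alpha_{i,j})$; without \eqref{eq:deltaSymmetry} one would only get the asymmetric sum. Everything else is a direct application of \Cref{lemma:momentProperties} and the definitions \eqref{eq:u_ij_and_T_ij_defs}, so there is no real analytic obstacle; the proof is essentially bookkeeping. I would write it out in three short displayed equations: the single-species momentum rate, the difference before symmetrization, and the final simplified form.
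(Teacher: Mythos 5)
Your proposal is correct and follows essentially the same route as the paper: derive the single-species momentum equation $\partial_t(\rho_i\bfu_i)=\sum_k\rho_i\lambda_{i,k}(\bfu_{i,k}-\bfu_i)$ from \Cref{lemma:momentProperties}, restrict to the two-species setting, take the difference, and use \eqref{eq:u_ij_def} together with the symmetry \eqref{eq:deltaSymmetry} to obtain the factor of two. Your explicit emphasis on \eqref{eq:deltaSymmetry} (and your use of \eqref{eq:momentProperty2}, which is indeed the relevant moment identity) matches the paper's argument, just written out in more detail.
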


    \begin{proof}
Integrate \eqref{eq:MLB} against $m_i\bfv$ and use \eqref{eq:momentProperty1} to obtain the following system of equations that describes the time evolution of the momentum moment system:
    \begin{align}
\intd \frac{\partial f_i}{\partial t}m_i\bfv\dv
  &=
    m_i\sum_{j}\intd \cC_{i,j}(f_i,f_j)\bfv\dv
  =
    \sum_jm_i\lambda_{i,j}\theta_{i,j}\intd\bfv\cL(f_i,M_{i,j})\dv
    \nonumber
    \\
    \label{eq:momentumRelaxation}
\iff 
\frac{\partial}{\partial t}(\rho_i\bfu_i)
  &= \sum_j\rho_i\lambda_{i,j}(\bfu_{i,j}-\bfu_i)
.
    \end{align}
To derive the relaxation rates between species $i$ and $j$, neglect the species in $\{1,\cdots,N\}\setminus\{i,j\}$, take the difference of relaxation rates in \eqref{eq:momentumRelaxation}, and use \eqref{eq:u_ij_def} and \eqref{eq:deltaSymmetry}
to obtain \eqref{eq:momentumDiffRelax}.
    \end{proof}
    
    \begin{lemma}
    \label{lemma:temperatureRelaxationRates}
Thermal relaxation rates between species $i$ and $j$ are given by
    \begin{equation}
    \label{eq:temperatureDiffRelax_MLB}
\left.
  \frac{\partial}{\partial t}\frac{d}{2}
  \left(
    n_iT_i-n_jT_j
  \right)
\right|_{\textnormal{M-LB}}
  =
    2dn_i\lambda_{i,j}(1-\beta_{i,j})
    (T_j-T_i)
    +
    (2n_i\lambda_{i,j}\gamma_{i,j} - \delta_{i,j})
    |\bfu_i-\bfu_j|^2
.
    \end{equation}
    \end{lemma}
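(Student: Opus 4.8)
The plan is to derive the thermal relaxation rate by starting from the energy moment equation and subtracting off the kinetic-energy contribution, exactly as was done in the body of the paper to obtain the temperature ODE. First I would integrate \eqref{eq:MLB} against $m_i\tfrac{|\bfv|^2}{2}$ and apply \eqref{eq:momentProperty3} together with the definitions in \eqref{eq:u_ij_and_T_ij_defs}; using $\theta_{i,j} = T_{i,j}/m_i$ and $\theta_i = T_i/m_i$, the right-hand side of \eqref{eq:momentProperty3} becomes $-\rho_i\lambda_{i,j}\bfu_i\cdot(\bfu_i - \bfu_{i,j}) - dn_i\lambda_{i,j}(T_i - T_{i,j})$, which yields the single-species energy ODE $\partial_t E_i = \sum_j[\rho_i\lambda_{i,j}\bfu_i\cdot(\bfu_{i,j}-\bfu_i) + dn_i\lambda_{i,j}(T_{i,j}-T_i)]$. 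Restricting to the two-species case (dropping all species outside $\{i,j\}$), I would then use \eqref{eq:energyDefinition} in the form $E_i = \tfrac12\rho_i|\bfu_i|^2 + \tfrac{d}{2}n_iT_i$ and the momentum ODE \eqref{eq:momentumRelaxation} (or \eqref{eq:momentumODE}) to isolate $\partial_t(\tfrac{d}{2}n_iT_i) = \partial_t E_i - \bfu_i\cdot\partial_t(\rho_i\bfu_i)$, giving $\tfrac{d}{2}\partial_t(n_iT_i) = dn_i\lambda_{i,j}(T_{i,j}-T_i)$ for the single pair.

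Next I would substitute the definition \eqref{eq:T_ij_def} of $T_{i,j}$, namely $T_{i,j} = \beta_{i,j}T_i + (1-\beta_{i,j})T_j + \tfrac1d\gamma_{i,j}|\bfu_i-\bfu_j|^2$, so that $T_{i,j} - T_i = (1-\beta_{i,j})(T_j - T_i) + \tfrac1d\gamma_{i,j}|\bfu_i-\bfu_j|^2$, which produces $\tfrac{d}{2}\partial_t(n_iT_i) = dn_i\lambda_{i,j}(1-\beta_{i,j})(T_j-T_i) + n_i\lambda_{i,j}\gamma_{i,j}|\bfu_i-\bfu_j|^2$ — i.e. the temperature ODE already recorded just after \eqref{eq:momentODEsystem}, specialized to two species. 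Writing the analogous expression with $i$ and $j$ swapped and subtracting gives
\[
\tfrac{d}{2}\partial_t(n_iT_i - n_jT_j) = dn_i\lambda_{i,j}(1-\beta_{i,j})(T_j-T_i) - dn_j\lambda_{j,i}(1-\beta_{j,i})(T_i-T_j) + \bigl(n_i\lambda_{i,j}\gamma_{i,j} - n_j\lambda_{j,i}\gamma_{j,i}\bigr)|\bfu_i-\bfu_j|^2.
\]
I would then invoke the symmetry \eqref{eq:betaSymmetry}, $n_i\lambda_{i,j}(1-\beta_{i,j}) = n_j\lambda_{j,i}(1-\beta_{j,i})$, to combine the first two terms into $2dn_i\lambda_{i,j}(1-\beta_{i,j})(T_j-T_i)$, and the symmetry \eqref{eq:gammaSymmetry}, $\delta_{i,j} - 2n_i\lambda_{i,j}\gamma_{i,j} = 2n_j\lambda_{j,i}\gamma_{j,i} - \delta_{j,i}$, together with $\delta_{i,j}=\delta_{j,i}$ from \eqref{eq:deltaSymmetry}, to rewrite $n_i\lambda_{i,j}\gamma_{i,j} - n_j\lambda_{j,i}\gamma_{j,i} = 2n_i\lambda_{i,j}\gamma_{i,j} - \delta_{i,j}$. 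This gives exactly \eqref{eq:temperatureDiffRelax_MLB}.

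The computation is essentially bookkeeping, so there is no deep obstacle; the step demanding the most care is the reduction $\partial_t(\tfrac{d}{2}n_iT_i) = \partial_t E_i - \bfu_i\cdot\partial_t(\rho_i\bfu_i)$, where one must correctly feed in the momentum ODE — note that the cross term $\rho_i\lambda_{i,j}\bfu_i\cdot(\bfu_{i,j}-\bfu_i)$ from $\partial_t E_i$ cancels against $\bfu_i\cdot\partial_t(\rho_i\bfu_i) = \rho_i\lambda_{i,j}\bfu_i\cdot(\bfu_{i,j}-\bfu_i)$, leaving only the temperature-difference and $|\bfu_i-\bfu_j|^2$ terms. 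The other point to watch is that the $|\bfu_i-\bfu_j|^2$ coefficient comes out with the correct sign and factor after applying \eqref{eq:gammaSymmetry}; writing out $2n_i\lambda_{i,j}\gamma_{i,j}-\delta_{i,j}$ and checking it against the swapped expression $-(2n_j\lambda_{j,i}\gamma_{j,i}-\delta_{j,i})$ confirms consistency. Everything else follows by direct substitution of the definitions and the three symmetry relations in \eqref{eq:modelSymmetries}.
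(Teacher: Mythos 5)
Your proposal is correct and follows essentially the same route as the paper: integrate against $m_i|\bfv|^2/2$, cancel the kinetic-energy contribution using the momentum equation to obtain $\tfrac{d}{2}\partial_t(n_iT_i)=d\sum_j n_i\lambda_{i,j}(T_{i,j}-T_i)$, then restrict to the pair $\{i,j\}$, take the difference, and apply \eqref{eq:T_ij_def} together with the symmetries \eqref{eq:modelSymmetries}. Your write-up in fact spells out the final symmetry bookkeeping (including the use of \eqref{eq:betaSymmetry} and $\delta_{i,j}=\delta_{j,i}$) more explicitly than the paper does.
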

    
    \begin{proof}
Integrating \eqref{eq:MLB} against $m_i|\bfv|^2/2$, and recalling the definition of the energy $E_i$, in \eqref{eq:energyDefinition}, gives:
    \begin{align}
    \label{eq:energyRelaxation}
\intd\frac{\partial f_i}{\partial t} m_i\frac{|\bfv|^2}{2}\dv
  &=
    \sum_jm_i\lambda_{i,j}\theta_{i,j}\intd \frac{|\bfv|^2}{2}\cL(f_i,M_{i,j})\dv
    \\
\iff
\frac{\partial}{\partial t}
  \left(
    \frac{1}{2}\rho_i|\bfu_i|^2 +\frac{d}{2}n_iT_i
  \right)
  &= 
    \sum_j\rho_i\lambda_{i,j}\bfu_i \cdot(\bfu_{i,j}-\bfu_i)
    +
    d\sum_jn_i\lambda_{i,j}(T_{i,j}-T_i)
.
    \nonumber
    \end{align}
Using \eqref{eq:momentumRelaxation}, the first term is
    \begin{equation}
    \label{eq:d_tKineticEnergy}
\frac{\partial}{\partial t}
  \left(
    \frac{1}{2}\rho_i|\bfu_i|^2
  \right)
  =
    \bfu_i \cdot \frac{\partial}{\partial t}(\rho_i\bfu_i)
  =
    \sum_j\rho_i\lambda_{i,j}\bfu_i\cdot(\bfu_{i,j}-\bfu_i)
.
    \end{equation}
With this, it is clear that the terms in \eqref{eq:energyRelaxation} involving the velocity cancel.
Thus, \eqref{eq:energyRelaxation} becomes
    \begin{equation}
    \label{eq:temperatureRelaxation}
\frac{\partial}{\partial t}
  \left(
    \frac{d}{2}n_iT_i
  \right)
  = 
    d\sum_jn_i\lambda_{i,j}(T_{i,j}-T_i)
,
    \end{equation}
a system of equations that describes the time evolution of the temperature moment system.
As above, we neglect species in $\{1,\cdots,N\}\setminus\{i,j\}$, take differences of relaxation rates in \eqref{eq:temperatureRelaxation}, and use \eqref{eq:T_ij_def} and \eqref{eq:gammaSymmetry} to obtain \eqref{eq:temperatureDiffRelax_MLB}.
    \end{proof}

    \begin{remark}
In this time-continuous case, the relaxation moment equation obtained for the energy relaxation is simpler than for the BGK model of \cite{Haack2017,asymptoticRelaxation}, since, as shown in \eqref{eq:d_tKineticEnergy}, the terms involving the bulk velocity cancel directly, whereas, these terms must remain as part of the energy equation in \cite{asymptoticRelaxation}.
    \end{remark}
    
    \bibliographystyle{plain}
    \bibliography{references}

    \end{document}